\newtheorem{thm}{Theorem}[section]
\newtheorem{cor}[thm]{Corollary}
\newtheorem{lem}[thm]{Lemma}
\theoremstyle{definition}
\newtheorem{defn}[thm]{Definition}
\theoremstyle{remark}
\newtheorem{rem}[thm]{Remark}
\numberwithin{equation}{section}
\begin{document}
\title{An elliptic Garnier system}

\author{Chris M. Ormerod}
\address{Caltech, 1200 E California Blvd, Pasadena CA 91125}
\email{cormerod@caltech.edu}
\author{Eric M. Rains}
\address{Caltech, 1200 E California Blvd, Pasadena CA 91125}
\email{rains@caltech.edu}

\begin{abstract}
We present a linear system of difference equations whose entries are expressed in terms of theta functions. This linear system is singular at $4m+12$ points for $m \geq 1$, which appear in pairs due to a symmetry condition. We parameterize this linear system in terms a set of kernels at the singular points. We regard the system of discrete isomonodromic deformations as an elliptic analogue of the Garnier system. We identify the special case in which $m=1$ with the elliptic Painlev\'e equation, hence, this work provides an explicit form and Lax pair for the elliptic Painlev\'e equation. 
\end{abstract}
\keywords{elliptic, isomonodromy, integrable, Lax pair, difference equations}

\maketitle

\tableofcontents

\section{Introduction}

The nonlinear differential equations governing isomonodromic deformations of linear systems of ordinary differential equations play an important role in the theory of integrable systems. The Garnier system is a nonlinear system of commuting ordinary differential equations characterizing the isomonodromic deformations of a second order Fuchsian system with $N+3$ simple poles, three of which are fixed at $0$, $1$ and $\infty$, and where the remaining $N$ simple poles may be considered to be time variables \cite{Okamoto1981, Garnier}. When $N=1$,  the Garnier system is equivalent to the sixth Painlev\'e equation \cite{Fuchs1, Fuchs2}. We recently presented four distinct classes of discrete integrable systems that may be considered to be discrete analogues of the Garnier system \cite{Ormerod2016}, three of which were new and one of which coincided with the $q$-Garnier system of Sakai \cite{Sakai:Garnier}. The aim of this work is to present an elliptic analogue of the Garnier system.

Our generalization is based on a discrete analogue of isomonodromy for linear systems of difference equations \cite{Borodin:connection, Sakai:qP6, Ormerodlattice, Gramani:Isomonodromic, rains:isomonodromy}. These works concern linear systems of difference equations, which may written in matrix form as
\begin{align}\label{Alinear}
\sigma Y(z) = A(z)Y(z),
\end{align}
where $\sigma$ is a shift operator. The cases that have recieved the most attention have been when $\sigma = \sigma_h : f(z) \to f(z+h)$ or $\sigma = \sigma_q: f(z)\to f(qz)$ and $A(z)$ is a rational matrix. When this is the case, the notion of monodromy is based on the existence of two canonical solutions defined by series solutions that are convergent in neighborhoods of $\Re z = \pm \infty$ in the case of $\sigma_h$ and $z = 0, \infty$ in the case of $\sigma_q$ \cite{Birkhoff, Birkhoffallied, BirkhoddAdamsSum}. The matrix of connection coefficients relating these two solutions plays a role that is analogous to the monodromy matrices. A discrete isomonodromic deformation is induced by an second linear system of the form 
\begin{align}\label{Rlinear}
\tau Y(z) = R(z)Y(z),
\end{align}
where $\tau$ is some shift operator acting on some auxiliary variables. The compatibility of these two systems ($\sigma \circ \tau Y(z) = \tau \circ \sigma Y(z)$) imposes the condition
\begin{equation}\label{asymcomp}
(\sigma R(z)) A(z) = (\tau A(z)) R(z),
\end{equation}
where the entries of $A(z)$ satisfy nonlinear difference equations as functions of the auxiliary variables of $\tau$. The first discrete analogue of isomonodromy appeared in the work of Papageorgiou et al. \cite{Gramani:Isomonodromic}, where the authors derived a discrete analogue of the third Painlev\'e equation from \eqref{asymcomp}. The connection to the work of Birkhoff was made later in a study that introduced a $q$-analogue of the sixth Painlev\'e equation by Jimbo and Sakai \cite{Sakai:qP6}. It was shown that imposing \eqref{Rlinear} where $R(z)$ satisfies \eqref{asymcomp} preserves the connection matrix of a regular (Fuchsian) system of linear $q$-difference equations where $A(x)$ is rational and singular at four values of $z$ \cite{Sakai:qP6}. Discrete isomonodromic deformations shift pairs of these singular values, which we may think of as being proportional to an auxiliary time variable, say $t$, while shifting different pairs give transformations that commute with the evolution in $t$ \cite{Ormerodlattice}. 

The discrete Garnier systems presented in \cite{Ormerod2016} arose from systems of linear difference equations of the form \eqref{Alinear} and \eqref{Rlinear}, where $A(x)$ is a rational $2\times 2$ matrix with an even number of singular points. An important property of two of the systems introduced was that the solutions satisfied an additional symmetry property, which we can generalize (see \cite{Ormerod2016a}) to systems of the form \eqref{Alinear} satifying the additional property 
\[
Y(z) = Y(\eta -z),
\] 
in the case $\sigma = \sigma_h$ and 
\begin{align}\label{symmetry}
Y(z) = Y\left( \dfrac{\eta}{z} \right),
\end{align}
in the case in which  $\sigma = \sigma_q$. This implies that the matrix $A(z)$ necessarily possesses some additional structure, as does any resulting deformation of the form \eqref{Rlinear}. We call any Lax pair, \eqref{Alinear} and \eqref{Rlinear}, with this additional property a {\em symmetric Lax pair} \cite{Ormerod2016a}. 

In this study, the elliptic Garnier system arises from a linear system of the form \eqref{Alinear} with the property \eqref{symmetry} where $\sigma=\sigma_q$ where the entries $A(z)$ are {\em explicit meromorphic theta functions} as opposed to rational functions. In this way, the coefficients of the system of difference equations have an interpretation as holomorphic sections of a line bundle on an elliptic curve. This work serves as a particular example of a linear system of difference equations on an elliptic curve, the moduli spaces of which were studied in a previous paper \cite{Rains2013}.  

In contrast with previous discrete Garnier systems, the matrix $A(z)$ in \eqref{Alinear} is singular at $4m+12$ points, which come in pairs due to \eqref{symmetry}. These pairs play the role of the auxiliary time variables. The elliptic Garnier system is defined to be the system of discrete isomonodromic deformations, which are described in terms of two fundamental types of involutions, the compositions of which are of infinite order. In this way, the fundamental involutions we present generate copies of the infinite dihedral group, which is a more natural setting for describing the difference equations arising as discrete isomonodromic deformations of symmetric linear systems of difference equations. 

In the case in which $m=1$, we obtain a system of difference equations on an elliptic curve whose relevant 2-dimensional moduli space is a rational surface which may be identified through \cite{Rains2013} with the surface of initial conditions for the elliptic Painlev\'e equation \cite{Sakai:Rational}. This means we are able provide a new and explicit Lax pair for the elliptic Painlev\'e equation \cite{NoumiYamada:ellE8Lax, rains:isomonodromy}. In this way, our system generalizes the elliptic Painlev\'e equation in the same way that the Garnier system and discrete Garnier systems generalize the sixth Painlev\'e equation and the discrete analogues of the sixth Painlev\'e equation respectively. 

This article is organized as follows: In \S \ref{Background} we review the basic background, including a brief recapitulation of the notation used for elliptic functions and some modern notion of what it means to be a discrete isomonodromic deformation in this setting. In \S \ref{linearprob} we give explicit representations of the entries of the associated linear problem for the elliptic Garnier system. In \S \ref{isomonodromic} we outline the two canonical types of transformations that generate the elliptic Garnier system. Lastly, in \S \ref{ellipticPE8} we demonstrate how specializing to the case $m=1$ gives the elliptic Painlev\'e equation.

\section{Background}\label{Background}

We fix a $q \in \mathbb{C}^{*}$ such that $|q| < 1$. We denote the $q$-Pochhammer symbol by
\begin{align}
(z;q)_\infty &= \prod_{k=0}^{\infty} (1-q^{k} z),\\
(z;q)_k &= \dfrac{(z;q)_{\infty}}{(q^kz;q)_{\infty}},
\end{align}
for $k \in \mathbb{Z}$ and
\[
(z_1,z_2,\ldots,z_n;q)_{k} = \prod_{i=1}^n (z_i;q)_k.
\] 
The building block for our associated linear problem is the $q$-theta function (see \cite{GasperRahman}), given by
\begin{align}
\theta_q(z) &= (z,q/z;q)_{\infty},\\
\theta_q(z_1,\ldots,z_n) &= \prod_{k=1}^{n} \theta_q(z_k),
\end{align}
which satisfies the relations
\begin{equation}\label{thetaidentity}
\theta_q(z) = - z \theta_q(qz) = -z \theta_q\left( \dfrac{1}{z} \right).
\end{equation}
The Jacobi triple product formula can be stated in our notation as
\begin{equation}
\theta_q(z) = \dfrac{1}{(q;q)_{\infty}} \sum_{n= -\infty}^{\infty} (-1)^{n}q^{{n \choose 2}} z^n.
\end{equation}
Many identities may be derived from the properties of theta functions as holomorphic sections of an line bundle. For example, the addition law may be derived from the fact that the space of theta functions with two zeros and fixed multiplier form a 2-dimensional vector space. Working out the precise linear dependence involves direct substitution. One of the many equivalent forms is given by
\begin{equation}\label{addition}
a \theta_q \left(\frac{z}{a},a z,\frac{b}{c},b c\right)-b \theta_q \left(\frac{a}{c}, a c,\frac{z}{b},b z\right)+b \theta_q \left(\frac{a}{b},a b,\frac{z}{c},c z\right) = 0.
\end{equation}
Such relations mean that there are multiple ways of expressing the same function. Unlike the field of rational functions, there is no canonical form for a given theta function, hence, we cannot guarantee that any particular presentation is the simplest possible presentation.

\begin{lem}\label{divisibility}
If $f(z)$ is a holomorphic theta function, satisfying $f(pz) = Cz^{-k} f(z)$, and $f(x) = 0$, then $f(z)/\theta_p(z/x)$ is also a holomorphic theta function.
\end{lem}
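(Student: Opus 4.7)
The plan is to verify two things about the quotient $g(z) := f(z)/\theta_p(z/x)$: first, that it is holomorphic on $\mathbb{C}^*$, and second, that it satisfies a quasi-periodicity of the same type as $f$.

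First I would exploit the quasi-periodicity $f(pz) = Cz^{-k}f(z)$ to propagate the zero at $x$. Since the multiplier $Cz^{-k}$ is holomorphic and nonvanishing on $\mathbb{C}^*$, the functional equation can be iterated in either direction to conclude that $f(xp^n) = 0$ for every $n \in \mathbb{Z}$; moreover the order of vanishing at each $xp^n$ is at least as large as at $x$, hence at least one. On the other hand, the product formula $\theta_p(w) = (w,p/w;p)_\infty$ shows that $\theta_p(z/x)$ has simple zeros exactly along the discrete set $\{xp^n : n \in \mathbb{Z}\}$ and is otherwise nonvanishing and holomorphic on $\mathbb{C}^*$. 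Consequently all the poles of $1/\theta_p(z/x)$ are cancelled by zeros of $f$, and $g(z)$ extends to a holomorphic function on $\mathbb{C}^*$.

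Next I would compute the transformation of $g$ under $z \mapsto pz$. Using the identity \eqref{thetaidentity} applied to $\theta_p$ with argument $z/x$, one has $\theta_p(pz/x) = -(x/z)\,\theta_p(z/x)$. Therefore
\[
g(pz) \;=\; \frac{f(pz)}{\theta_p(pz/x)} \;=\; \frac{C z^{-k} f(z)}{-(x/z)\,\theta_p(z/x)} \;=\; -\frac{C}{x}\, z^{-(k-1)}\, g(z),
\]
which is of the required form with new multiplier $C' = -C/x$ and degree $k-1$. Combined with the holomorphicity established above, this shows $g$ is a holomorphic theta function.

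There is no real obstacle here: the only subtlety is to be sure that the single zero of $f$ at $x$ really forces a full orbit of zeros under the $p$-action, and this is immediate from the fact that the multiplier in the quasi-periodicity equation has neither zeros nor poles on $\mathbb{C}^*$. I would make that one-line observation explicit to justify the cancellation, after which the rest is a direct computation.
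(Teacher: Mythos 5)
Your proof is correct. The paper states this lemma without proof, treating it as a standard fact, and your argument is exactly the expected one: the functional equation with nonvanishing multiplier $Cz^{-k}$ forces $f$ to vanish to the same order along the whole orbit $\{xp^n\}_{n\in\mathbb{Z}}$, which is precisely the (simple) zero set of $\theta_p(z/x)$, and the computation $g(pz) = -(C/x)\,z^{-(k-1)}g(z)$ confirms the quotient again has a multiplier of the required form. No gaps.
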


This lemma allows gives us an easy formulation of divisibility, allowing us to factor theta functions without relying on numerous identities or relying on a particular choice of basis. One last function required is the elliptic Gamma function,
\begin{equation}\label{Gamma}
\Gamma_{p,q}(z) = \prod_{i,j \geq 0} \dfrac{1-p^{i+1}q^{j+1}/z}{1-p^iq^jz},
\end{equation}
which satisfies the relations
\begin{align}
\Gamma_{p,q}(pz) &= \theta_q(z) \Gamma_{p,q}(z),\\
\Gamma_{p,q}(qz) &= \theta_p(z) \Gamma_{p,q}(z),\\
\Gamma_{p,q}(pq/z) &= \Gamma_{p,q}(z)^{-1}.
\end{align}

In relation to linear systems parameterized in terms of other elliptic functions \cite{Mumford1983a, Mumford1983}, we have that the four Jacobi theta functions, $\vartheta_1(z;\tau),\ldots, \vartheta_4(z;\tau)$ may be expressed in terms of $\theta_q(x)$. By letting $q = \exp(i \pi \tau)$, we have the relations
\begin{align*}
\vartheta_1(z;\tau) &= i q^{1/4}(q^2;q^2)_{\infty} e^{-\pi i z} \theta_{q^2}(e^{2\pi i z}),\\
\vartheta_2(z;\tau) &= \vartheta_1\left(z + \dfrac{1}{2};\tau\right),\\
\vartheta_3(z;\tau) &= (q^2;q^2)_{\infty} \theta_{q^2}(-qe^{2\pi i z}),\\
\vartheta_4(z;\tau) &= \vartheta_3\left(z+\dfrac{1}{2}; \tau \right).
\end{align*}
One could equally use the Jacobi elliptic theta function, $\vartheta = \vartheta_3$, to parameterize the global sections of an elliptic curve. Note that the quasi-periodicity is now expressed in terms of the functional equation
\[
\vartheta(z) = \vartheta(z+1) = - e^{-i \pi z}\vartheta(z + 2\tau). 
\]
In this way, a coherent theory of linear systems may be expressed either in terms of Jacobi elliptic functions, $\vartheta$, or $q$-theta functions with associated difference operators given by $\sigma_h$ and $\sigma_q$ respectively. An analytic approach to linear systems of difference equations involving a parameterization in terms of Jacobi theta functions was presented by Krichever \cite{Krichever2004}. We remark that the conditions imposed in \cite{Krichever2004} are not satisfied by the system we present in the next section. The Jacobi theta functions may also be used to express Jacobi's elliptic functions, $\mathrm{sn}$, $\mathrm{cn}$ and $\mathrm{dn}$ and the second derivative of $\log \vartheta$ is also the Weierstass $\wp$-function (plus a constant), which may also be used in conjunction with other related functions such as $\zeta$ and $\sigma$ \cite{Nijhoff2016}.

We now turn directly to the theory of linear systems of the form \eqref{Alinear} with the additional symmetry condition, \eqref{symmetry}. If we assume \eqref{symmetry}, then we have two equivalent ways of calculating $Y(qx)$ in terms of $Y(\eta/x)$ given by 
\[
Y(qz) = Y\left( \dfrac{\eta}{qz} \right) = A\left(\dfrac{\eta}{q z}\right)^{-1}Y\left(\dfrac{1}{z}\right) = A(z)Y\left(\dfrac{\eta}{z}\right).
\]
For this evolution to be consistent, we require that $A(z)$ satisfies the relation
\begin{equation}\label{Asym}
A(z) A\left( \dfrac{\eta}{qz} \right) = I,
\end{equation}
when $A(z)$ is not singular. The following lemma is useful in characterizing the space of matrices, $A(z)$, satisfying this property.

\begin{lem}\label{lem:existencegenatu}
Let $\mathbb{L}/\mathbb{K}$ be a quadratic field extension and $A \in \mathrm{GL}_n(\mathbb{L})$ be a matrix such that $\bar{A}A = I$, where $\bar{A}$ is the conjugation of $A$ in $\mathbb{L}$ over $\mathbb{K}$. Then there exists a matrix $B \in \mathrm{GL}_n(\mathbb{L})$ such that $A = \bar{B}B^{-1}$ and $B$ is unique up to right-multiplication by $\mathrm{GL}_n(\mathbb{K})$.
\end{lem}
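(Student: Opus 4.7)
The plan is to treat existence and uniqueness separately, with uniqueness being immediate and existence requiring a Galois-descent/averaging argument. For uniqueness, if $A = \bar{B_1}B_1^{-1} = \bar{B_2}B_2^{-1}$ then $\bar{B_2}^{-1}\bar{B_1} = B_2^{-1}B_1$, so $C := B_2^{-1}B_1$ is fixed by Galois conjugation, hence $C \in \mathrm{GL}_n(\mathbb{K})$ and $B_1 = B_2 C$, giving the stated uniqueness up to right-multiplication by $\mathrm{GL}_n(\mathbb{K})$.

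For existence I would first observe that $\bar A A = I$ implies $A \bar A = I$ by Galois-conjugating the identity, and then construct $B$ by an averaging trick. For any $X \in M_n(\mathbb{L})$, set $B_X := X + \bar A \bar X$. A direct calculation gives $\overline{B_X} = \bar X + AX = A(X + \bar A \bar X) = A B_X$, so whenever $B_X$ is invertible we have $A = \overline{B_X}\, B_X^{-1}$ as required. Existence thus reduces to showing that \emph{some} $B_X$ is invertible, equivalently that the $\mathbb{K}$-linear subspace $V := \{B \in M_n(\mathbb{L}) : \bar B = AB\}$ meets $\mathrm{GL}_n(\mathbb{L})$.

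The main obstacle is this final invertibility claim, which I would handle by a Zariski-density argument. Assuming $\mathrm{char}(\mathbb{K}) \ne 2$, pick $\alpha \in \mathbb{L}$ with $\bar\alpha = -\alpha$; every $Y \in M_n(\mathbb{L})$ then splits uniquely as $Y = u + \alpha w$ with $u = \tfrac{1}{2}(Y + \bar A \bar Y) \in V$ and $w = \tfrac{1}{2\alpha}(Y - \bar A \bar Y) \in V$, so $M_n(\mathbb{L}) = V \oplus \alpha V$ and $\dim_\mathbb{K} V = n^2$. Consequently any $\mathbb{K}$-basis $v_1,\ldots,v_{n^2}$ of $V$ is simultaneously an $\mathbb{L}$-basis of $M_n(\mathbb{L})$, so $\det(\sum t_i v_i)$ is a nonzero polynomial in $\mathbb{L}[t_1,\ldots,t_{n^2}]$ (it specializes to $\det I = 1$ at the identity matrix). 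Since $\mathbb{K}$ is infinite in all intended applications, this polynomial cannot vanish identically on $\mathbb{K}^{n^2}$, so there exist $k_i \in \mathbb{K}$ with $B := \sum k_i v_i \in V \cap \mathrm{GL}_n(\mathbb{L})$. The technical heart is the dimension count for $V$, which is exactly where the cocycle condition $\bar A A = I$ enters; once that is in hand the genericity conclusion is routine, and in characteristic $2$ one would need a separate argument, but this is not relevant for the meromorphic-function fields appearing in the sequel.
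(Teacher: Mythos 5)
Your proof is correct. Note, though, that the paper does not actually prove this lemma: it simply identifies it as a specialization of Hilbert's Theorem~90 (for $\mathrm{GL}_n$ and a quadratic Galois extension) and cites its earlier use in the rational case. What you have written is essentially the standard proof of that theorem adapted to the quadratic setting: the averaging map $X \mapsto X + \bar A\bar X$ is the two-term Poincar\'e series for the cocycle $A$, and the cocycle identity $A\bar A = I$ (obtained, as you say, by conjugating $\bar A A = I$) is exactly what makes $\overline{B_X} = AB_X$. Two small points worth tightening. First, ``splits uniquely'' in the decomposition $M_n(\mathbb{L}) = V \oplus \alpha V$ needs the one-line check that $V \cap \alpha V = 0$: if $v = \alpha v'$ with $v, v' \in V$, then $\bar v = Av$ while $\bar v = -\alpha A v' = -Av$, so $2Av = 0$ and $v = 0$; your explicit formulas for $u$ and $w$ only give existence of the decomposition, not uniqueness. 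Second, your existence argument assumes $\mathrm{char}\,\mathbb{K} \neq 2$ and $\mathbb{K}$ infinite, which you flag honestly; both hold for the field of elliptic functions used in the paper, though the lemma as stated is more general (for finite fields one would replace the Zariski-density step by the linear-independence-of-characters argument in the classical proof). The uniqueness half is exactly right and needs no hypotheses beyond separability of $\mathbb{L}/\mathbb{K}$.
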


This is the same specialization of Hilbert's Theorem 90 that we used in \cite{Ormerod2016} applied to the field of elliptic functions with the same periods. If the entries of $A(x)$ are in the field of elliptic functions, we apply Lemma \ref{lem:existencegenatu} where $\bar{A}(x) = A(\eta/qx)$ is the conjugation in this field to show that there exists a $B(z)$ with entries in the field of elliptic functions with the same period such that
\begin{equation}\label{product}
A(z) = B\left(\dfrac{\eta}{qz} \right)^{-1} B(z).
\end{equation}
We could consider the moduli space of symmetric matrices, $A(z)$, up to constant gauge transformations, i.e., a constant guage transformation corresponds to conjugation of $A(z)$ by constant matrices, which in turn corresponds to multiplication of $B(z)$ on the right by some matrix. In light of \eqref{product}, $B(z)$ itself is only defined up multiplication on the left by a constant matrix, hence, the moduli space of symmetric matrices, $A(z)$, up to constant guage transformations is related to the moduli space of $B(z)$ matrices up to multiplication on the left and right by constant matrices.

In order to consider our Garnier system the result of a discrete isomonodromic deformation, we require that the transformations preserve some sort of structure. The analogous notion of monodromy for systems of difference and $q$-difference equations is given by a connection matrix \cite{Birkhoff, Birkhoffallied}. Given a regular system of difference equations and $q$-difference equations of the form of \eqref{Alinear}, there exists two fundamental solutions defined in terms of series solutions that are convergent in two different regions of the complex plane. The connection matrix, which is usually denoted $P(z)$, defines the relation between these fundamental solutions in the same way that the monodromy matrices define the relation between two solutions, one of which is the solution obtained by integrating around a singularity. 

A modern interpretation of the ideas of monodromy lies in the Galois theory of difference equations \cite{Etingof1995, vanderPutSinger, Sauloy} which is closely related to differential Galois theory \cite{VanderPut2003}. The relevant difference field is obtained by adjoining either the formal symbolic solutions satisfying \eqref{Alinear} or the meromorphic solutions of \eqref{Alinear}. The group of automorphisms  preserving the difference ring structure is called the difference Galois group \cite{vanderPutSinger}. When the two fundamental solutions exist, we have two different, yet isomorphic, difference fields. The connection matrix defines an isomorphism between these two difference fields, hence, we obtain automorphisms of the difference field by considering elements of the form $P(u)^{-1} P(v)$ when defined \cite{Etingof1995}. The meromorphic functions solving both \eqref{Alinear} and \eqref{symmetry} arise as a corollary of the following theorem.

\begin{thm}[Thereom 3 of \cite{Praagman:Solutions}]\label{Praagman}
Let $G$ be a group of automorphisms of $\mathbb{P}^1$, $L$ is the limit set of $G$ and $U$ a component of $\mathbb{P}^1\setminus L$ such that $G(U) = U$. If there is a map, $G \to \mathrm{GL}_m(\mathcal{M}_U)$, $g \to A_g(x)$ satisfying 
\[
A_{gh}(x) = A_g(h(z)) A_h(z), 
\]
then the system of equations
\[
Y(\gamma(z)) = A_{\gamma}(z) Y(z) ,\hspace{2cm} \gamma \in G,
\]
possesses a meromorphic solution. 
\end{thm}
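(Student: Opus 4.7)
The plan is to reinterpret the cocycle $\{A_g\}$ as meromorphic descent data for a rank-$m$ vector bundle on the quotient $X := U/G$ and then to invoke the meromorphic triviality of such bundles on a Riemann surface. Because $U$ is a component of $\mathbb{P}^1 \setminus L$ preserved by $G$, a standard Kleinian-group argument shows that $G$ acts properly discontinuously on $U$, so $X$ inherits the structure of a complex analytic orbifold whose underlying space is a Riemann surface. The cocycle relation $A_{gh}(z) = A_g(hz) A_h(z)$ is exactly the $1$-cocycle condition in group cohomology with values in $\mathrm{GL}_m(\mathcal{M}_U)$, and hence it defines a rank-$m$ meromorphic bundle $E$ on $X$ obtained by twisting the trivial bundle $\mathcal{M}_U^{\oplus m}$ by the $G$-action $(g,\phi) \mapsto A_g(\cdot)\,\phi\circ g^{-1}$.

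Next, I would produce a global meromorphic frame of $E$. If $X$ is non-compact, Behnke--Stein implies that $X$ is Stein and Grauert's theorem then gives triviality of $E$ as a holomorphic (hence meromorphic) bundle. If $X$ is compact, a standard Riemann--Roch argument shows that the meromorphic hull of any rank-$m$ holomorphic bundle has an $m$-dimensional space of global meromorphic sections over $\mathcal{M}(X)$, any basis of which is a meromorphic frame. Pulling such a frame back along the $G$-cover $U \to X$ produces a matrix $Y \in \mathrm{GL}_m(\mathcal{M}_U)$ whose $G$-translates transform by exactly the prescribed $A_\gamma$, yielding the desired meromorphic solution.

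The main obstacle is the potential presence of orbifold points on $X$, at which descent requires compatibility with the action of a finite stabilizer; I would handle this by passing to a finite Galois cover $X' \to X$ that resolves the orbifold structure, performing the trivialization on $X'$, and then checking that the resulting $Y$ extends meromorphically across the finitely many removed preimages (local boundedness combined with $G$-equivariance supplies this extension automatically). As a backup, one can write down the Poincar\'e-series ansatz $Y(z) = \sum_{g \in G} A_g(z)^{-1} \Phi(gz)$ for a meromorphic matrix-valued test function $\Phi$ essentially supported on a fundamental domain: applying the cocycle identity and the substitution $g' = g\gamma$ gives $Y(\gamma z) = A_\gamma(z) Y(z)$ formally, while proper discontinuity of the $G$-action on $U$ is precisely what is needed for the sum to converge meromorphically and, by suitable choice of $\Phi$, to have nonzero determinant.
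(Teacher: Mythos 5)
The paper does not actually prove this statement: it is imported verbatim as Theorem~3 of Praagman's paper and used as a black box, so there is no internal argument to compare against. Your main line of proof --- proper discontinuity of $G$ on the invariant component $U$ of the ordinary set, reinterpretation of the cocycle as descent data for a rank-$m$ meromorphic bundle on $X=U/G$, and meromorphic triviality of that bundle (Behnke--Stein plus Grauert when $X$ is open, a Riemann--Roch/generic-fibre argument when $X$ is compact) --- is precisely the strategy of Praagman's original proof, and your twisted $G$-action is set up correctly: its invariant sections are exactly the solutions $Y$ with $Y(\gamma z)=A_\gamma(z)Y(z)$. In the application made in this paper the group is the infinite dihedral group generated by $z\mapsto \eta/qz$ and $z\mapsto \eta/z$, with limit set $\{0,\infty\}$ and $U=\mathbb{C}^*$, so the quotient is compact and it is the Riemann--Roch branch of your argument that is actually used.

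Two caveats. First, the orbifold step is where the real work hides, and ``local boundedness plus equivariance supplies the extension automatically'' is not an argument: $A_g$ is only meromorphic, so nothing is bounded near an elliptic fixed point. The clean route is to solve the problem first for a finite-index torsion-free normal subgroup $G_0\le G$ (Selberg's lemma in the finitely generated case), obtaining an $m$-dimensional $\mathcal{M}(U/G_0)$-vector space of solutions on which $G/G_0$ acts semilinearly, and then apply classical Hilbert~90 for the finite Galois extension $\mathcal{M}(U/G_0)/\mathcal{M}(U/G)$ --- the same Hilbert~90 mechanism the paper invokes in Lemma~2.2. Second, the Poincar\'e-series ``backup'' does not work as stated: a nonzero meromorphic $\Phi$ cannot be essentially supported on a fundamental domain, so proper discontinuity gives no convergence of $\sum_g A_g(z)^{-1}\Phi(gz)$; to salvage it one must take $\Phi$ smooth and then repair the smooth solution by a $\bar\partial$-argument, which is a genuinely different proof. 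Since the backup is not needed, your primary argument stands, but the backup should not be presented as an equivalent alternative.
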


In the context of $q$-theta functions, the relevant group, $G$, is given by
\begin{align*}
G = \left\langle \tau_1, \tau_2 |  \tau_1(x) = \dfrac{\eta}{qx}, \tau_2(x) = \dfrac{\eta}{x} \right\rangle.
\end{align*}
If we let $A_{\tau_2} = I$ in each case and $A_{\tau_1}(x)$ be, $A(x)^{-1}$, we recover \eqref{Alinear} and \eqref{symmetry}. This is the same logic used in \cite{Ormerod2016} applied to a characteristically different class of linear problem. 

The Tannakian structure of the category of difference modules gives us an intrinsic definition of the Galois group. The consequence is that isomorphic difference modules have isomorphic Galois groups. Since isomonorphisms of difference modules are defined by transformations of the form \eqref{Rlinear} we may draw the following conclusion. 

\begin{cor}\label{corintegrability}
Two systems, $\sigma Y(x) = A(x)Y(x)$ and $\sigma \tilde{Y}(x) = \tilde{A}(x)\tilde{Y}(x)$,  related by \eqref{Rlinear} defines a transformation that preserves the Galois group.
\end{cor}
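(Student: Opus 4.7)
The plan is to interpret the relation \eqref{Rlinear} as an explicit isomorphism in the category of difference modules and then to invoke the invariance of the difference Galois group under such isomorphisms. Writing $\tilde{A}(z) = \tau A(z)$, the compatibility condition \eqref{asymcomp} rearranges to $\tilde{A}(z) = (\sigma R(z))\, A(z)\, R(z)^{-1}$, which is exactly the transformation rule for the matrix of a $\sigma$-linear operator under a change of basis by $R(z)$. A direct one-line check confirms this at the level of solutions: if $\sigma Y = AY$ and we set $\tilde{Y} := RY$, then
\begin{equation*}
\sigma \tilde{Y} = (\sigma R)(\sigma Y) = (\sigma R) A Y = \tilde{A} R Y = \tilde{A}\tilde{Y}.
\end{equation*}

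Next I would translate this into the language of difference modules. To the system $\sigma Y = AY$ one associates a difference module $M_A$, a free rank-$n$ module over the base difference field $K$ equipped with the semilinear operator determined by $A$. Multiplication by the matrix $R(z)$ then defines a $K$-linear map $M_A \to M_{\tilde{A}}$ which, by the computation above, intertwines the two $\sigma$-structures. Since $R(z)$ lies in $\mathrm{GL}_n$ of the appropriate field of meromorphic functions, this map is in fact an isomorphism of difference modules.

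The final step is a direct appeal to the Tannakian formalism referenced in the paragraph preceding the corollary. The category of difference modules over $K$ is a neutral Tannakian category, and the difference Galois group attached to a module $M$ is the automorphism group of a fibre functor restricted to the tensor subcategory $\langle M \rangle^{\otimes}$ generated by $M$. An isomorphism $M_A \simeq M_{\tilde{A}}$ induces an equivalence $\langle M_A \rangle^{\otimes} \simeq \langle M_{\tilde{A}} \rangle^{\otimes}$ of these Tannakian subcategories, and hence a canonical isomorphism of the associated Tannaka groups, which is exactly the statement of the corollary.

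The only real obstacle is bookkeeping with conventions: one must pin down the convention used for the semilinear operator attached to $A$ (primal versus dual, left versus right action) and then check that the intertwiner is $R$ itself rather than $R^{-1}$, $\sigma R$, or some combination. Once the conventions are fixed the rest is formal, which is why the statement is naturally packaged as a corollary of the preceding Tannakian discussion rather than proved from scratch.
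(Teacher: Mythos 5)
Your argument is exactly the one the paper intends: the preceding paragraph of the paper states that transformations of the form \eqref{Rlinear} define isomorphisms of difference modules and that the Tannakian (intrinsic) definition of the Galois group makes it invariant under such isomorphisms, which is precisely your chain of reasoning, merely spelled out in more detail (the explicit intertwining computation and the equivalence of the generated tensor subcategories). The proposal is correct and matches the paper's approach.
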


This defines an appropriate notion of monodromy that exists regardless of the existence of any connection matrix. It is in this sense that the transformations we may consider are discrete isomonodromic deformations. 

\begin{rem}
A more precise formulation of isomonodromy specific to $p$-theta $q$-difference equations may be found in \cite{rains:isomonodromy} which associates to every $p$-theta $q$-difference equation a $q$-theta $p$-difference equation. It is shown that the constructed $q$-theta $p$-difference equation is invariant under transformations of the form \eqref{Rlinear} satisfying \eqref{asymcomp}. Moreover, the correspondence between the system of $p$-theta $q$-difference equations and the system of $q$-theta $p$-difference equations established in \cite{rains:isomonodromy} is invertible by the same construction. This may effectively be used to establish an analogous Riemann-Hilbert type correspondence for systems of elliptic difference equations \cite{rains:isomonodromy}. 
\end{rem}

\section{The associated linear problem}\label{linearprob}

Our aim is to present a matrix $A(z)$ whose entries lie in the space of theta functions with some fixed multiplier. We may easily relate such matrices to those with coefficients in the field of elliptic functions with the same period via gauge transformations by elliptic Gamma functions. In this way, we relate $A(z)$ to a matrix $B(z)$ via \eqref{product} whose entries are in the space of holomorphic theta functions. The conditions we impose on the entries of $B(z)$ are as follows:
\begin{align}
\tag{$A_1$}\label{C1}&\begin{array}{p{14cm}}The determinant of $B(z)$ vanishes at points $u_0$, \ldots, $u_{2m+5}$ and is nonzero.\end{array}\\
\tag{$A_2$}\label{C2}&\begin{array}{p{14cm}}The kernel of $B(u_k)$ is $\langle ( x_k, y_k )\rangle$ for $0 \leq k \leq 2m+2$.\end{array}\\
\tag{$A_3$}\label{C3}&\begin{array}{p{14cm}}The images of $B(z)$ at $z= u_{2m+3}$, $u_{2m+ 4}$ and $u_{2m+5}$ are $\langle (1,1)\rangle$, $\langle (0,1)\rangle$ and $\langle (1,0)\rangle$ respectively.
\end{array}
\end{align}
So long as $u_0, \ldots, u_{2m+5}$ are distinct, and since the entries of $B(z)$ lie in a $(m+3)$-dimensional vector space, these conditions are sufficient to specify $B(z)$ up to multiplication by some scalar matrix, i.e., this is sufficient to uniquely define $A(z)$. To succinctly specify $B(z)$, we first define the following notation; we fix an $m$, then if $S \subset \mathbb{Z}_{2m+3} = \{ 0,1,\ldots, 2m+2\}$, then we define  
\begin{align*}
&x_S = \prod_{i \in S} x_i && u_S= \prod_{i \in S} u_i,  && y_S = \prod_{i \in S} y_i,\\
&x_{\bar{S}} = \prod_{\stackrel{i \in \mathbb{Z}_{2m+3}}{i \notin S}} x_i && u_{\bar{S}}= \prod_{\stackrel{i \in \mathbb{Z}_{2m+3}}{i \notin S}} u_i,  && y_{\bar{S}}=   \prod_{\stackrel{i \in \mathbb{Z}_{2m+3}}{i \notin S}} y_i.
\end{align*}
We claim that we may satisfy conditions \eqref{C1}, \eqref{C2} and \eqref{C3} by letting
\begin{align}\label{Bentries}
B(z) = \begin{pmatrix} B_{11}(z) & B_{12}(z) \\ B_{21}(z) & B_{22}(z) \end{pmatrix}
\end{align}
where the entries may be specified by
\begin{subequations}\label{BGarnier}
\begin{align}
B_{11}(z) =
\dfrac{\theta_p\left(\frac{z}{u_{2m+4}}\right)}
     {\theta_p\left(\frac{u_{2m+3}}{u_{2m+4}}\right)}
&\sum_{\stackrel{S\subset \mathbb{Z}_{2m+3}}{|S|=m+1}}
\dfrac{x_{S}y_{\bar{S}}}{u_{\bar{S}}u_{2m+4}}
\theta_p\left(\frac{u_{S}u_{2m+4}z}{L},\frac{u_{\bar{S}}u_{2m+4}}{L}\right) \\
                            &\times \prod_{i\in S} \theta_p\left(\frac{z}{u_i}\right) \prod_{\stackrel{i\in S}{j\notin S}} u_j^{-1}\theta_p\left(\dfrac{u_i}{u_j}\right)^{-1},\nonumber
\end{align}
\begin{align}
B_{12}(z) =
\frac{\theta_p\left(\frac{z}{u_{2m+4}}\right)}
     {\theta_p\left(\frac{u_{2m+3}}{u_{2m+4}}\right)}
&\sum_{\stackrel{S\subset \mathbb{Z}_{2m+3}}{|S|=m+2}} \dfrac{(-1)^mx_Sy_{\bar{S}}}{u_{S}u_{2m+4}} 
\theta_p\left(\frac{u_{2m+4}u_S}{L}, \frac{u_{2m+4}zu_{\bar{S}}}{L}\right)
\\
&\times \prod_{i\notin S} \theta_p\left(\frac{z}{u_i}\right) \prod_{\stackrel{i\in S}{j\notin S}} u_j^{-1}\theta_p\left(\dfrac{u_i}{u_j}\right)^{-1},\nonumber
\end{align}
\begin{align}
B_{21}(z) =
\dfrac{\theta_p\left(\frac{z}{u_{2m+5}}\right)}
     {\theta_p\left(\frac{u_{2m+3}}{u_{2m+5}}\right)}
&\sum_{\stackrel{S\subset \mathbb{Z}_{2m+3}}{|S|=m+1}} \dfrac{x_Sy_{\bar{S}}}{u_{\bar{S}}u_{2m+5}}
\theta_p\left(\frac{u_{2m+5}u_Sz}{L},
                           \frac{u_{2m+5}u_{\bar{S}}}{L}\right)\\
&\times \prod_{i\in S} \theta_p\left(\frac{z}{u_i}\right)
\prod_{\stackrel{i\in S}{j\notin S}}  u_j^{-1}\theta_p\left(\dfrac{u_i}{u_j}\right)^{-1},\nonumber
\end{align}
\begin{align}
B_{22}(z) =
\dfrac{\theta_p\left(\frac{z}{u_{2m+5}}\right)}
     {\theta_p\left(\frac{u_{2m+3}}{u_{2m+5}}\right)}
& \sum_{\stackrel{S\subset \mathbb{Z}_{2m+3}}{|S|=m+2}} \dfrac{(-1)^mx_Sy_{\bar{S}}}{u_{S}u_{2m+5}}\theta_p\left(\frac{u_{2m+5}u_S}{L},
                            \frac{u_{2m+5}u_{\bar{S}} z}{L}\right)
\\
&\times \prod_{i\notin S} \theta_p\left(\frac{z}{u_i}\right)
\prod_{\stackrel{i\in S}{j\notin S}}  u_j^{-1}\theta_p\left(\dfrac{u_i}{u_j} \right)^{-1},\nonumber
\end{align}
\end{subequations}
where $L$ satisfies
\begin{equation}\label{Ldef}
L^2 = \prod_{j=0}^{2m+5} u_j.
\end{equation}
Firstly, each of these entries have the property $f(pz) = L z^{-(m+3)}f(z)$, which defines the fixed multiplier. Secondly, while the entries of $B(z)$ depend the specific choices of homogeneous coordinates, $(x_k,y_k) \in \mathbb{C}^2$, the entries are homogeneous, and thus $A(z)$ depends only on the corresponding points in $\mathbb{P}^1$.

\begin{thm}
The matrix $B(z)$ has properties \eqref{C1}, \eqref{C2} and \eqref{C3}, furthermore, $A(z)$ is uniquely determined by \eqref{C1}, \eqref{C2} and \eqref{C3}.
\end{thm}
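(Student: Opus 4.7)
The plan is to split the proof into two parts: verifying that the displayed $B(z)$ satisfies $(A_1)$, $(A_2)$, $(A_3)$, and then showing these conditions pin down $A(z)$ uniquely. For the verification, I would first note that every summand of each $B_{ij}(z)$ is a product of $m+3$ theta-factors of the form $\theta_p(z/\cdot)$, so each entry lies in the $(m+3)$-dimensional space of theta functions with multiplier $L z^{-(m+3)}$. The image conditions of $(A_3)$ at $u_{2m+4}$ and $u_{2m+5}$ are immediate from the prefactors: $\theta_p(z/u_{2m+4})$ divides both $B_{11}$ and $B_{12}$, forcing the image at $z=u_{2m+4}$ into $\langle(0,1)\rangle$, and symmetrically for $u_{2m+5}$. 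For $(A_3)$ at $u_{2m+3}$, the normalizations $\theta_p(u_{2m+3}/u_{2m+4})$ and $\theta_p(u_{2m+3}/u_{2m+5})$ are arranged so that both rows evaluate to the same vector; the termwise identification uses $L^2=\prod_j u_j$ to rewrite $u_S u_{2m+4} u_{2m+3}/L = L/(u_{\bar S} u_{2m+5})$, and then $\theta_p(z)=-z\,\theta_p(1/z)$ to convert each $B_{11}$ summand into the corresponding $B_{21}$ summand, and similarly for $B_{12}$ versus $B_{22}$.

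Next, for $(A_2)$ at $z=u_k$ with $k\le 2m+2$, the product $\prod_{i\in S}\theta_p(u_k/u_i)$ appearing in $B_{11}$ forces $k\notin S$, while $\prod_{i\notin S}\theta_p(u_k/u_i)$ in $B_{12}$ forces $k\in S$, so the two sums reindex via the bijection $S\leftrightarrow S\cup\{k\}$. For each matched pair I would show that the contributions to $x_k B_{11}(u_k)+y_k B_{12}(u_k)$ cancel termwise: the outer theta factors $\theta_p(u_S u_{2m+4} u_k/L)$ and $\theta_p(u_{\bar S} u_{2m+4}/L)$ are literally common between $B_{11}$ at $S$ and $B_{12}$ at $S\cup\{k\}$ (using $u_{\overline{S\cup\{k\}}}=u_{\bar S}/u_k$), and the remaining products $u_j^{-1}\theta_p(u_i/u_j)^{-1}$ match using $\theta_p(u_i/u_k)=-(u_i/u_k)\theta_p(u_k/u_i)$, the sign $(-1)^m$, and the size bookkeeping $|S|=m+1$ versus $|S\cup\{k\}|=m+2$. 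The row-2 identity follows by the same computation with $u_{2m+4}$ replaced by $u_{2m+5}$. Condition $(A_1)$ is then automatic: $\det B(z)$ is a theta function with multiplier $L^2 z^{-(2m+6)}$, hence lies in a $(2m+6)$-dimensional space, and vanishes at every $u_j$ (nontrivial kernels at $u_0,\dots,u_{2m+2}$ from $(A_2)$; two vanishing rows at $u_{2m+4}$ and $u_{2m+5}$; proportional rows at $u_{2m+3}$). By repeated application of Lemma \ref{divisibility}, $\det B(z)$ is a scalar multiple of $\prod_{j=0}^{2m+5}\theta_p(z/u_j)$, whose multiplier in $z$ is $(-1)^{2m+6}L^2 z^{-(2m+6)}$, consistent with the count; the scalar is nonzero for generic $u_j$, which establishes $(A_1)$.

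For uniqueness I would argue row by row. The first row $(B_{11},B_{12})$ lies in the $2(m+3)=2m+6$ dimensional space of pairs of theta functions with multiplier $Lz^{-(m+3)}$; the conditions $B_{11}(u_{2m+4})=B_{12}(u_{2m+4})=0$ together with $x_k B_{11}(u_k)+y_k B_{12}(u_k)=0$ for $k=0,\ldots,2m+2$ cut out a subspace of codimension at most $2m+5$. The explicit solution already constructed shows this subspace is nonzero, and a rank computation generic in $(u_j)$ and $(x_j:y_j)$ confirms it is exactly one-dimensional. The second row is determined in the same way. The image condition at $u_{2m+3}$ imposes two linear relations between the two row-scales which, restricted to the respective one-dimensional spaces, are proportional (both express that the two rows become parallel with slope one), so they fix the ratio of scales and leave only an overall scalar. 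Hence $B(z)$ is unique up to a scalar matrix, and $A(z)=B(\eta/qz)^{-1}B(z)$ is uniquely determined. I expect the main obstacle to be the termwise cancellation in $(A_2)$: the outer theta factors are designed to coincide under $S\leftrightarrow S\cup\{k\}$, but the sign, monomial, and $\theta_p(u_i/u_j)^{-1}$ contributions need careful reconciliation; a secondary subtlety is $(A_3)$ at $u_{2m+3}$, where the symmetry under $u_{2m+4}\leftrightarrow u_{2m+5}$ is not manifest and must be extracted by the $\theta_p$ inversion rewrite.
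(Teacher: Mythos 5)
Your verification of $(A_1)$--$(A_3)$ follows the paper's route almost exactly: the prefactors giving the images at $u_{2m+4}$ and $u_{2m+5}$, the $L^2=\prod_j u_j$ rewrite combined with $\theta_p(z)=-z\theta_p(1/z)$ at $u_{2m+3}$, and the $S\leftrightarrow S\cup\{k\}$ reindexing with termwise cancellation for the kernel conditions; the divisibility argument giving $\det B(z)=C\prod_k\theta_p(z/u_k)$ is also the paper's. The genuine gap is that you assert, twice, a nondegeneracy fact that is exactly the content needing proof. First, ``the scalar is nonzero for generic $u_j$'' is not automatic: a priori $C$ could vanish identically in the parameters. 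The paper settles this by specializing the kernels to $(1,1)$, $(1,0)$, $(0,1)$ (for $k=0$, $k\in P$, $k\in Q$) so that every entry of $B(z)$ collapses to a single summand, factoring the resulting matrix explicitly, and reading off that the determinant is a visibly nonzero difference of theta products. Some such witness is required; without it the ``is nonzero'' clause of $(A_1)$ is not established.

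Second, your uniqueness argument takes a genuinely different route from the paper's --- a row-by-row dimension count in the $(2m+6)$-dimensional space of row vectors --- but it hinges on ``a rank computation generic in $(u_j)$ and $(x_j:y_j)$ confirms it is exactly one-dimensional,'' which is again precisely the point at issue and is not carried out. The $2m+5$ linear conditions give dimension at least $1$; nothing in your argument rules out dimension $\geq 2$, and you would additionally need to check that the two relations at $u_{2m+3}$ fixing the ratio of the two row scales are consistent rather than overdetermined. The paper avoids any rank computation: given a second solution $\tilde{B}(z)$, it forms $C(z)=B(z)\tilde{B}(z)^{-1}$, observes that $C$ is elliptic with constant determinant and at worst simple poles at $u_{2m+3}$, $u_{2m+4}$, $u_{2m+5}$ whose residues are forced to be proportional to explicit rank-one matrices, and then uses the fact that an elliptic function with a single simple pole is constant to conclude first that the diagonal entries are constant (so $C$ is holomorphic at $u_{2m+3}$), then that the off-diagonal entries are constant, hence that $C$ is a constant matrix preserving the three images, i.e.\ scalar. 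That argument needs only $\det\tilde{B}\not\equiv 0$ and no genericity of any linear system; if you prefer your dimension count, you must actually exhibit the independence of the $2m+5$ conditions, for instance by the same kernel specialization used for the determinant.
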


\begin{proof}
To show \eqref{C3} and \eqref{C1} for $z= u_{2m+4}$, $z = u_{2m+5}$, we see that 
\[
B_{11}(u_{2m+4}) =B_{11}(u_{2m+4}) =  B_{21}(u_{2m+5}) = B_{22}(u_{2m+5}) = 0,
\]
showing that $\det B(u_{2m+4}) = \det B(z_{2m+5}) = 0$ and that 
\begin{align*}
\mathrm{Im}B(u_{2m+4}) = \left\langle (0,1) \right\rangle, \hspace{.5cm} \mathrm{Im}B(u_{2m+5}) = \left\langle (1,0) \right\rangle.
\end{align*}
To show \eqref{C3} and \eqref{C1} for $z = u_{2m+3}$, we evaluate $B_{11}(z) - B_{21}(z)$ at $z = u_{2m+3}$ to see
\begin{align*}
B_{11}&(u_{2m+3}) - B_{21}(u_{2m+3}) = \sum_{\stackrel{S\subset \mathbb{Z}_{2m+3}}{|S|=m+1}} \dfrac{x_{S}y_{\bar{S}}}{u_{\bar{S}}} \left[u_{2m+4}^{-1} \theta_p\left( \dfrac{u_Su_{2m+3}u_{2m+4}}{L}, \dfrac{u_{\bar{S}}u_{2m+4}}{L}\right)\right. \\
&\left. -  u_{2m+5}^{-1}\theta_p\left( \dfrac{u_Su_{2m+3}u_{2m+5}}{L}, \dfrac{u_{\bar{S}}u_{2m+5}}{L}\right) \right]  \prod_{i\in S} \theta_p\left(\frac{z}{u_i}\right) \prod_{\stackrel{i\in S}{j\notin S}} u_j^{-1}\theta_p\left(\dfrac{u_i}{u_j}\right)^{-1},
\end{align*}
where the term inside the square brackets is $0$, which can be shown using \eqref{thetaidentity} and \eqref{Ldef}. We see that $B_{12}(u_{2m+3}) = B_{22}(u_{2m+3})$ for similar reasons. This shows that 
\[
B(u_{2m+3}) \begin{pmatrix} 1 \\ 0 \end{pmatrix} \in \langle \begin{pmatrix} 1 \\ 1 \end{pmatrix} \rangle, \hspace{1cm} B(u_{2m+3}) \begin{pmatrix} 0 \\ 1 \end{pmatrix} \in \langle \begin{pmatrix} 1 \\ 1 \end{pmatrix} \rangle,
\] 
which shows that $\mathrm{Im}B(u_{2m+3}) = \langle (1,1) \rangle$ and hence $\det B(u_{2m+3}) = 0$. 

To show \eqref{C2}, and complete the proof of property \eqref{C1}, we let $k \in \{ 0, \ldots, 2m+2\}$, for which we claim that 
\begin{equation}\label{keruk}
x_k B_{11}(u_k) + y_k B_{12}(u_k) = x_k B_{21}(u_k) + y_k B_{22}(u_k) = 0.
\end{equation}
The summands of $B_{11}$ in which $k \in S$ vanish at $z= u_k$, hence, we write $B_{11}$ as a sum over subsets of $\mathbb{Z}_{2m+3}\setminus\{k\}$ of size $m+1$. Similarly, the summands of $B_{12}$ vanish if $k \notin S$, but given a subset of $\mathbb{Z}_{2m+3}$ of size $m+1$ in which $k\notin S$, taking the union with $\{k\}$ defines a one-to-one correspondence with the subsets of $\mathbb{Z}_{2m+3}$ of size $m+2$ that include $k$. We may use this to write the left hand side of \eqref{keruk} as a single sum over subsets of $\mathbb{Z}\setminus\{k\}$ of size $m+1$. In this way, we may write the left hand side of \eqref{keruk} as
\begin{align*}
&x_k B_{11}(u_k) + y_k B_{12}(u_k) = \dfrac{\theta_p\left(\frac{u_k}{u_{2m+4}}\right)}{\theta_p\left(\frac{u_{2m+3}}{u_{2m+4}}\right)} \sum_{\stackrel{S\subset \mathbb{Z}_{2m+3} }{\stackrel{|S|=m+1}{k\notin S}}} \dfrac{x_k x_S y_{\bar{S}}}{u_{2m+4}} \theta_p\left(\frac{u_{S}u_{2m+4}u_k}{L}, \frac{u_{2m+4}u_{\bar{S}}}{L}\right)\\
& \left[\dfrac{1}{u_{\bar{S}}}\prod_{i\in S} \theta_p\left(\dfrac{u_k}{u_i}\right)  +\dfrac{(-1)^m}{u_S}  \prod_{i\notin S} \theta_p\left(\dfrac{u_k}{u_i}\right) \left(\prod_{j \notin S} u_j^{-1} \theta_p\left(\dfrac{u_k}{u_j} \right) \right)^{-1} \prod_{i \in S} u_k^{-1} \theta_p\left(\dfrac{u_i}{u_k} \right) \right] \times \prod_{\stackrel{i \in S}{j \notin S}} u_j^{-1} \theta_p \left( \dfrac{u_i}{u_j} \right),
\end{align*}
where the term inside the square brackets vanishes due to \eqref{thetaidentity}.

To show that the overall determinant is non-zero we specialize the kernels in the following way, if we let $P = \{ 1,\ldots, m+1\}$ and $Q = \{m+2,\ldots, 2m+2\}$
\[
(x_k,y_k) = \left\{ \begin{array}{c p{5cm}}
(1,1) & if  $k = 0$, \\
(1,0) & if $k \in P$,\\ 
(0,1) & if $k \in Q$. 
\end{array}\right.
\]
Specializing the kernels of $B(z)$ in this way means each entry of $B(z)$ has just one non-zero summand. This simplifies $B(z)$ to the following 
\begin{align*}
B(z) =& u_0^{-1}\prod_{\stackrel{i \in P}{j \in Q}} u_j^{-1} \theta_p\left(\frac{u_i}{u_j}\right)^{-1}
 \begin{pmatrix}
\frac{\theta_p\left(\frac{z}{u_{2m+4}} \right) }{u_{2m+4} \theta_p\left( \frac{u_{2m+3}}{u_{2m+4}}\right)} & 0 \\
0 &\frac{\theta_p\left(\frac{z}{u_{2m+5}} \right) }{u_{2m+5} \theta_p\left( \frac{u_{2m+3}}{u_{2m+5}}\right)} 
 \end{pmatrix} \\
& \begin{pmatrix}
\theta_p \left(\dfrac{u_{2m+4}zu_P}{L},\dfrac{u_{2m+4}u_0u_Q}{L}\right) &
\theta_p \left(\dfrac{u_{2m+4}u_0u_P}{L},\dfrac{zu_{2m+4}zu_Q}{L}\right)\\
-\theta_p \left(\dfrac{u_{2m+5}zu_P}{L},\dfrac{u_{2m+5}u_0u_Q}{L}\right) &
-\theta_p \left(\dfrac{u_{2m+5}u_0u_P}{L},\dfrac{u_{2m+5}zu_Q}{L}\right)
\end{pmatrix}\\
& \prod_{i\in I_1} \begin{pmatrix} \dfrac{\theta_p\left( \frac{z}{u_i} \right)}{u_0  \theta_p\left( \frac{u_i}{u_0} \right)} & 0 \\ 
0 & -\dfrac{1}{u_i} \end{pmatrix}
 \prod_{j\in I_2} \begin{pmatrix} \dfrac{1}{u_j} & 0 \\ 
0 & \dfrac{\theta_p\left( \frac{z}{u_j} \right)}{u_j  \theta_p\left( \frac{u_0}{u_j} \right)} \end{pmatrix}.
\end{align*} 
It is clear from this factorization that the determinant is non-zero provided
\begin{align*}
\theta_p \left(\dfrac{u_{2m+4}zu_P}{L},\dfrac{u_{2m+4}u_0u_Q}{L},\dfrac{u_{2m+5}u_0u_P}{L} ,\dfrac{u_{2m+5}zu_Q}{L}\right) - \\
\theta_p \left(\dfrac{u_{2m+4}u_0u_P}{L},\dfrac{zu_{2m+4}zu_Q}{L} \dfrac{u_{2m+5}zu_P}{L},\dfrac{u_{2m+5}u_0u_Q}{L}\right)\neq 0,
\end{align*}
which is generically non-zero. Furthermore, by degree considerations, these are the only zeroes on $\mathbb{C}^*/\langle p \rangle$.

For the second part of the proof, we suppsoe $\tilde{B}(z)$ is a matrix satisfying \eqref{C1}, \eqref{C2} and \eqref{C3}, then let $C(z) := B(z)\tilde{B}(z)^{-1}$. It should be clear the matrix $C(z)$ has elliptic entries, has constant determinant due to \eqref{C1} and has (simple) poles only at $z= u_{2m+3}$, $u_{2m+4}$ and $u_{2m+5}$. Both the image and kernel of the residue of $C(z)$ at $z = u_{2m+3}$ are $\langle (1,1)\rangle$. Similarly, the residues of $C(z)$ at $z= u_{2m+4}$ and $z_{2m+5}$ have images and kernels in $\langle (0,1)\rangle$ and $\langle (1,0)\rangle$. These conditions are sufficient to show that these three residues are proportional to
\[
\begin{pmatrix} 1 & -1 \\ 1 & -1 \end{pmatrix}, \hspace{.5cm} \begin{pmatrix} 0 & 0 \\ 1 & 0 \end{pmatrix}, \hspace{.5cm} \begin{pmatrix} 0 & 1 \\ 0 & 0 \end{pmatrix}
\]
respectively. Since the diagonal entries have only one nonzero residue, they must be constant, and thus $C(z)$ is actually holomorphic at $z=u_{2m+3}$. But then the off-diagonal entries have only one nonzero residue, making $C(z)$ holomorphic at $u_{2m+4}$ and $u_{2m+5}$ as well.  In particular, it follows that $C(z)$ is constant. Since $C(z)\tilde{B}(z)=B(z)$, $C(z)$ must preserve the images of $\tilde{B}(z)$ at $z=u_{2m+3}$, $u_{2m+4}$ and $u_{2m+5}$, and thus must be a scalar matrix.
\end{proof}

The values of the spectral parameters in which $\det B(z)$ vanishes, in this case $u_0,\ldots, u_{2m+5}$, are considered as independent auxiliary parameters that change under the action of discrete isomonodromic deformations \cite{Borodin:connection, Sakai:qP6, Ormerodlattice, Sakai:Garnier}.  While simplifying the determinant of $B(z)$ using the relations such as \eqref{thetaidentity} and \eqref{addition} for general $m$ might be possible, we circumvent the use of these identities by using Lemma \ref{divisibility}. By \eqref{C1} and Lemma \ref{divisibility} we have that
\[
\det (B(z)) = C \prod_{k \in \mathbb{Z}_{2m+6}} \theta_p\left(\dfrac{z}{u_k} \right),
\]
for some constant $C$. We seek isomonodromic deformations that change the values of the $u_i$ in the next section. 

It is convenient to endow this linear system with the natural action of the symmetric group on a finite set of $2m+6$ symbols, $S_{2m+6}$, which acts on the $u_i$ by permutation. We denote the generators by
\[
S_{2m+6} = \langle s_i \cdot u_i \leftrightarrow u_{i+1} | i = 0,\ldots, 2m+4 \rangle
\]
whose nontrivial effect on the kernels of $B(z)$ is specified by the following lemma.

\begin{lem}
The matrix $A(x)$ is invariant under the action of the symmetric group, $S_{2m+6}$, where the action of the generators, $s_i$, on $(x_k:y_k)$ is trivial for $i = 2m+3$ and $i = 2m+4$, specified by 
\begin{align*}
s_i \cdot (x_k:y_k) = \left\{ \begin{array}{c p{5cm}} (x_k:y_k) & for $k \neq i, i+1$\\ 
(x_i:y_i) & if $k=i+1$,\\
(x_{i+1}:y_{i+1}) & if $k = i$, 
\end{array}\right.
\end{align*}
for $i = 0, \ldots, 2m+1$, and for $i = 2m+2$ is specified by
\begin{align}
s_{2m+2} \cdot (x_{2m+2}:y_{2m+2}) = (B_{12}(u_{2m+3}) : - B_{11}(u_{2m+3})).
\end{align}
\end{lem}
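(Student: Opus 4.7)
The plan is to exploit the uniqueness (up to scalar) of $B(z)$ established in the preceding theorem. Since $A(z) = B(\eta/(qz))^{-1} B(z)$, left-multiplication of $B$ by any constant invertible matrix $M$ leaves $A$ unchanged. Therefore, for each generator $s_i \in S_{2m+6}$, it suffices to construct a constant $M_i \in \mathrm{GL}_2(\mathbb{C})$ such that $\tilde B(z) := M_i B(z)$ satisfies conditions \eqref{C1}, \eqref{C2}, \eqref{C3} relative to the permuted parameters $(s_i \cdot u_j)$. The induced action on the kernel labels $(x_k : y_k)$ is then read off from $\ker \tilde B(\tilde u_k) = \ker B(\tilde u_k)$ (using invertibility of $M_i$), so the only thing to track is how the permutation redistributes which spectral values play the kernel role and which play the image role.

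For $i = 0, 1, \ldots, 2m+1$ the swap $u_i \leftrightarrow u_{i+1}$ only permutes kernel-type singularities, leaving the three image-type singularities $u_{2m+3}, u_{2m+4}, u_{2m+5}$ and their normalizations untouched, so one may take $M_i = I$; the induced action is the naive transposition of kernel labels, giving the stated formula. For $i \in \{2m+3, 2m+4\}$ only the image-type singularities are permuted, and an explicit check in $\mathbb{C}^2$ produces a constant $M_i$ that realizes the corresponding permutation of the fixed image lines $\langle(1,1)\rangle, \langle(0,1)\rangle, \langle(1,0)\rangle$ (for example $M_{2m+3} = \bigl(\begin{smallmatrix} -1 & 1 \\ 0 & 1 \end{smallmatrix}\bigr)$ swaps $\langle(0,1)\rangle \leftrightarrow \langle(1,1)\rangle$ while fixing $\langle(1,0)\rangle$, and $M_{2m+4} = \bigl(\begin{smallmatrix} 0 & 1 \\ 1 & 0 \end{smallmatrix}\bigr)$ swaps $\langle(1,0)\rangle \leftrightarrow \langle(0,1)\rangle$ while fixing $\langle(1,1)\rangle$). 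Invertibility of $M_i$ then gives $\ker \tilde B(u_k) = \ker B(u_k)$ for $k = 0, \ldots, 2m+2$, so the action on $(x_k : y_k)$ is trivial, as claimed.

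The substantive case is $s_{2m+2}$, which exchanges the kernel-type singularity $u_{2m+2}$ with the image-type singularity $u_{2m+3}$. After the swap, the image positions are $u_{2m+2}, u_{2m+4}, u_{2m+5}$; the normalization conditions at the unchanged positions $u_{2m+4}, u_{2m+5}$ force $M_{2m+2}$ to preserve $\langle(0,1)\rangle$ and $\langle(1,0)\rangle$, hence to be diagonal, while the remaining freedom is fixed (generically) by requiring $M_{2m+2}$ to send $\mathrm{Im}\, B(u_{2m+2})$ onto $\langle(1,1)\rangle$. The new kernel label at $\tilde u_{2m+2} = u_{2m+3}$ is then $\ker \tilde B(u_{2m+3}) = \ker B(u_{2m+3})$, and the proof of the preceding theorem already showed that $B_{11}(u_{2m+3}) = B_{21}(u_{2m+3})$ and $B_{12}(u_{2m+3}) = B_{22}(u_{2m+3})$; computing the kernel of this rank-one $2 \times 2$ matrix yields $\langle(B_{12}(u_{2m+3}), -B_{11}(u_{2m+3}))\rangle$, which is exactly the stated formula. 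The kernels at $u_0, \ldots, u_{2m+1}$ are preserved by invertibility of $M_{2m+2}$, so the action is trivial there.

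The main obstacle will be the genericity assumption used in the $s_{2m+2}$ case: the construction of $M_{2m+2}$ breaks down on the codimension-one loci where $\mathrm{Im}\, B(u_{2m+2})$ coincides with $\langle(1,0)\rangle$ or $\langle(0,1)\rangle$, and the kernel formula degenerates when $B_{11}(u_{2m+3}) = B_{12}(u_{2m+3}) = 0$. These exceptional strata should be excluded from the domain of the action, with the action extending by continuity elsewhere; the remaining braid relations in $S_{2m+6}$ then follow automatically from the uniqueness of the normalized $B$ on the generic locus.
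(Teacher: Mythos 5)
Your proposal is correct and follows essentially the same route as the paper's own proof: both arguments rest on the invariance of $A(z)$ under constant left multiplication of $B(z)$ together with the uniqueness of the normalized $B(z)$, treat the kernel-type and image-type transpositions as trivial via a renormalizing constant matrix, and isolate $s_{2m+2}$ as the only nontrivial generator, where the new kernel label is read off from $\ker B(u_{2m+3})$ using $B_{11}(u_{2m+3})=B_{21}(u_{2m+3})$ and $B_{12}(u_{2m+3})=B_{22}(u_{2m+3})$. Your version merely adds explicit choices of the normalizing matrices and a genericity caveat that the paper leaves implicit.
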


\begin{proof}
If we permute $u_{2m+3}$, $u_{2m+4}$ and $u_{2m+5}$ so that the image subspaces are not $\langle (1,1) \rangle$, $\langle (1,0) \rangle$ and $\langle (0,1) \rangle$ respectively, we have a matrix that sends the images $B(u_{2m+3})$, $B(u_{2m+4})$ and $B(u_{2m+5})$ back to the subspaces $\langle (1,1)\rangle$, $\langle (1,0)\rangle $ and $\langle (0,1)\rangle$ respectively. Since $A(z)$ is invariant under left multiplication of $B(z)$ by a constant matrix, this action is considered trivial with respect to $A(z)$. Similarly, if we swap $u_i$ with $u_{i+1}$ for $i = 0,\ldots, 2m+1$, we need only swap the relevant kernels. The only nontrivial action occurs if we swap a $u_i$ associated with a kernel with one that is associated with an image, i.e., this only happens for the generator $s_{2m+2}$. When we swap $u_{2m+2}$ with $u_{2m+3}$, then $(x_{2m+2}:y_{2m+2})$ is sent to the kernel of $B(u_{2m+3})$, which is as shown above, whereas we may send the image of $u_{2m+2}$ to $\langle (1,1) \rangle$ while fixing $\langle (0,1) \rangle$ and $\langle (1,0) \rangle$ by a constant (diagonal) matrix multiplication on the left which does not change $A(z)$.
\end{proof}

\section{The discrete isomonodromic deformations} \label{isomonodromic}

Having established the properties of the associated linear problem, we need to describe the group of isomonodromic deformations. Following the previous cases of discrete Garnier systems, what is required is that the translations that define the Garnier system are specified by taking two roots of the determinant, $z = u_i$ and $z= u_j$, and shifting them in some natural manner \cite{Sakai:Garnier, Ormerod2016}. As a $q$-difference equation in the variable $z$, where $z$ appears in the arguments of $p$-theta functions, we expect to move the variables $u_i$ and $u_j$ by multiplication by $q$, which has the interpretation in terms of the addition law on an elliptic curve.

We wish to specify two canonical involutive forms of discrete isomonodromic deformations as actions on $B$; those induced by an action on the left, and those induced by an action on the right. This is the natural setting for symmetric systems of difference equations as discrete isomonodromic deformations of the form \eqref{Rlinear}, in which the transformation relating $A(z)$ and $\tilde{A}(z)$, given by \eqref{asymcomp}, is equivalent to multiplying $B(z)$ by $R(z)^{-1}$ on the right, while $B(z)$ itself is only defined up to multiplication on the left. For the following discussion, we fix $i$ and $j$ and denote the discrete isomonodromic action induced by left multiplication by a matrix $R_l(z)$ by 
\begin{subequations}
\begin{align}\label{left}
\lambda_l(z) (E_{i,j} B(z)) = R_l(z) B(z),
\end{align}
while we denote the discrete isomonodromic deformation induced by multiplication on the right by $R_r(z)$ by 
\begin{align}\label{right}
\lambda_r(z) (F_{i,j} B(z)) = B(z)R_r(z),
\end{align}
where $\lambda_{l}(z)$ and $\lambda_{r}(z)$ are scalar factors. These scalar factors play no role in the definition of the Galois group, hence, we consider multiplying $A(z)$ by a scalar to be a trivial action. The consistency with \eqref{product} requires that $R_l(x)$ and $R_r(x)$ have the symmetry
\begin{align}
R_l(z) &= R_{l}\left(\dfrac{\eta}{qz} \right), \\
R_r(z) &= R_{r}\left( \dfrac{\eta}{z} \right).
\end{align}
\end{subequations}
If we are taking the points $u_i$ and $u_j$, we set $\det R_l(u_i) = \det R_l(u_j) =0$ ($\det R_r(u_i) = \det R_r(u_j) = 0$) then from the symmetry, we expect $\det R_l(\eta/qu_i) = \det R_l(\eta/qu_j)= 0$ ($\det R_r(\eta/u_i) = \det R_r(\eta/u_i) = 0$). In particular, by Lemma \ref{divisibility} we have that 
\begin{align*}
\lambda_l(z)^2\det (E_{i,j} B(z)) = \det R_l(z) B(z) = \theta_p\left( \dfrac{z}{u_i},\dfrac{z}{u_j}\right)^2\theta_p\left( \dfrac{qzu_i}{\eta}, \dfrac{qzu_j}{\eta}\right)C_l(z),\\
\lambda_r(z)^2\det (F_{i,j} B(z)) = \det B(z)R_r(z) = \theta_p\left( \dfrac{z}{u_i},\dfrac{z}{u_j}\right)^2\theta_p\left( \dfrac{zu_i}{\eta}, \dfrac{zu_j}{\eta}\right)C_r(z),
\end{align*}
indicating that the action of $E_{i,j}$ and $F_{i,j}$ fixes $u_{k}$ for $k \neq i,j$ and transforms $u_i$ and $u_j$ as follows
\[
E_{i,j} \cdot u_{i,j} \to \dfrac{\eta}{qz}, \hspace{1cm} F_{i,j} \cdot u_{i,j} \to \dfrac{\eta}{z},
\]
where the action on the kernels of $B(z)$ are yet to be specified. Secondly, we have that
\begin{align*}
\lambda_{l}(u_i) = \lambda_l(u_j) = \lambda_r(u_i) = \lambda_r(u_j)= 0.
\end{align*}
It follows that we may take 
\[
\lambda_l(z) = \lambda_r(z) = \theta_p\left( \dfrac{z}{u_i},\dfrac{z}{u_j}\right).
\]
We start with the conditions that define $R_{r}(z)$. We have that the entries of $R_r(z)$ are expressible in terms of functions of the form of $\theta_p(z/a, \eta/zqa)$ so that the symmetry constraint is satisfied. Setting $z= u_i$ and $z= u_j$ in \eqref{right} tells us
\begin{align}
\tag{$R_1$}\label{R1}&\begin{array}{p{14cm}} The determinant of $R_r(z)$ vanishes at points $z=u_i$ and $z=u_j$ and is nonzero.\end{array}\\
\tag{$R_2$}\label{R2}&\begin{array}{p{14cm}} The images of $R_r(u_i)$ and $R_r(u_j)$ are contained in the kernel of $B(u_i)$ and $B(u_j)$ respectively.\end{array}
\end{align}
In order to avoid any confusion, we simply define $(x_k:y_k)$ for general $k \in \mathbb{Z}_{2m+6}$ to be any vector $(x_k,y_k)$ generating the kernel of $B(u_k)$ which is consistent with $(x_k:y_k)$ for $k \in \mathbb{Z}_{2m+3}$. If the entries are in the space of theta functions with a fixed multiplier and two zeros, these conditions only determine $R_r(z)$ up to some diagonal matrix. We impose the additional constraint that 
\begin{align}
\tag{$R_3$}\label{R3}\begin{array}{p{14cm}} There is a fixed $v$ such that $R_r(v) = I$.\end{array}
\end{align}
This was implicitly done in the $q$-difference and $h$-difference cases for $v = \infty$, however, in the case of elliptic functions, no such distinguished point exists. We just need to specify a basis for the entries of $R_{r}(z)$ and determine the coefficients in accordance with \eqref{R1}, \eqref{R2} and \eqref{R3}. We claim the following is a presentation of the required matrix:
\begin{align*}
R_r(x) = \frac{1}{x_jy_i - x_iy_j}& \left( \begin{array}{c c} 
x_j y_i \dfrac{\theta_p\left(\frac{z}{u_i},\frac{zu_i}{\eta} \right)}{\theta_p\left(\frac{v}{u_i},\frac{vu_j}{\eta} \right)} - x_i y_j \dfrac{\theta_p\left(\frac{z}{u_j},\frac{zu_j}{\eta} \right)}{\theta_p\left(\frac{v}{u_j},\frac{vu_j}{\eta} \right)}
& x_ix_j \left(\dfrac{\theta_p\left(\frac{z}{u_j},\frac{zu_j}{\eta} \right)}{\theta_p\left(\frac{v}{u_j},\frac{vu_j}{\eta} \right)}-  \dfrac{\theta_p\left(\frac{z}{u_i},\frac{zu_i}{\eta} \right)}{\theta_p\left(\frac{v}{u_i} ,\frac{vu_i}{\eta} \right)}\right)\\
y_iy_j\left(  \dfrac{\theta_p\left(\frac{z}{u_i} ,\frac{zu_i}{\eta} \right)}{\theta_p\left(\frac{v}{u_i},\frac{vu_i}{\eta} \right)} - \dfrac{\theta_p\left(\frac{z}{u_j},\frac{zu_j}{\eta} \right)}{\theta_p\left(\frac{v}{u_j}\frac{vu_j}{\eta} \right)} \right) 
& x_jy_i \dfrac{\theta_p\left(\frac{z}{u_j},\frac{zu_j}{\eta} \right)}{\theta_p\left(\frac{v}{u_j},\frac{vu_j}{\eta} \right)}-  x_iy_j\dfrac{\theta_p\left(\frac{z}{u_i},\frac{zu_i}{\eta} \right)}{\theta_p\left(\frac{v}{u_i} ,\frac{vu_i}{\eta} \right)}
\end{array} \right).
\end{align*}
It should be clear that
\[
\det R_r(z) = \dfrac{\theta_p \left(\dfrac{z}{u_i},\dfrac{z}{u_j},\dfrac{\eta}{zu_i},\dfrac{\eta}{zu_j}\right)}{\theta_p \left(\dfrac{v}{u_i},\dfrac{v}{u_j},\dfrac{\eta}{vu_i},\dfrac{\eta}{vu_j}\right)},
\]
and is easy to verify that this matrix possesses properties \eqref{R1}, \eqref{R2} and \eqref{R3} by hand.

\begin{thm} 
The relation between $F_{i,j} \cdot (x_k:y_k) = (\tilde{x_k}:\tilde{y}_k)$ and the $(x_k:y_k)$ for $k \neq i,j$ is 
\begin{align}\label{eijaction}
(\tilde{x}_k:\tilde{y}_k) =& \left( u_j x_i \left(y_j x_k-x_j y_k\right) \theta
   _p\left(\frac{u_i}{u_k},\frac{v}{u_j},\frac{\eta }{u_i u_k},\frac{\eta }{u_j v}\right) \right.\\
   &+u_i x_j
   \left(x_i y_k-y_i x_k\right) \theta
   _p\left(\frac{u_j}{u_k},\frac{v}{u_i},\frac{\eta }{u_j u_k},\frac{\eta }{u_i v}\right)\nonumber\\
&: u_j y_i \left(y_j x_k-x_j y_k\right) \theta
   _p\left(\frac{u_i}{u_k},\frac{v}{u_j},\frac{\eta }{u_i u_k},\frac{\eta }{u_j v}\right) \nonumber\\
   &\left. +u_i y_j
   \left(x_i y_k-y_i x_k\right) \theta
   _p\left(\frac{u_j}{u_k},\frac{v}{u_i},\frac{\eta }{u_j u_k},\frac{\eta }{u_i v}\right) \right), \nonumber
\end{align}
where for $k = i,j$ we have
\[
(\tilde{x}_i,\tilde{y}_i) = (x_j, y_j), \hspace{1cm} (\tilde{x}_j,\tilde{y}_j) = (x_i, y_i).
\]
\end{thm}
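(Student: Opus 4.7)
The defining equation $\lambda_r(z)\tilde B(z) = B(z)R_r(z)$, with the scalar factor $\lambda_r(z) = \theta_p(z/u_i,\,z/u_j)$, implies $\ker\tilde B(z_0) = R_r(z_0)^{-1}\ker B(z_0)$ at any $z_0$ where $\lambda_r(z_0)\ne 0$. The plan is to analyze $R_r$ at the new kernel points of $\tilde B$, namely $\eta/u_i$, $\eta/u_j$, and $u_k$ for $k\neq i,j$, after first recasting $R_r$ in a transparent dyadic form. Writing $v_a = (x_a,y_a)^T$, $\omega_a = (y_a,-x_a)$, $\Delta = x_jy_i - x_iy_j = \omega_i v_j$, and
\[
\phi(z) = \frac{\theta_p(z/u_i,\,zu_i/\eta)}{\theta_p(v/u_i,\,vu_i/\eta)},\qquad \psi(z) = \frac{\theta_p(z/u_j,\,zu_j/\eta)}{\theta_p(v/u_j,\,vu_j/\eta)},
\]
the given formula for $R_r$ rearranges to $R_r(z) = \Delta^{-1}\bigl[\phi(z)\,v_j\omega_i - \psi(z)\,v_i\omega_j\bigr]$. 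All subsequent work is then driven by the orthogonality identities $\omega_a v_a = 0$ and $\omega_i v_j = -\omega_j v_i = \Delta$.

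The case $k\in\{i,j\}$ is then immediate. Both $\phi(u_i)=0$ and $\phi(\eta/u_i)=0$ hold (the latter because $\theta_p(zu_i/\eta)|_{z=\eta/u_i}=\theta_p(1)=0$), so each of $R_r(u_i)$ and $R_r(\eta/u_i)$ reduces to a nonzero multiple of $v_i\omega_j$, with image $\langle v_i\rangle$ (which reconfirms \eqref{R2}) and kernel $\langle v_j\rangle$. Since $B(\eta/u_i)$ is generically invertible, $\ker\tilde B(\eta/u_i)=\ker R_r(\eta/u_i)=\langle(x_j,y_j)\rangle$, so $(\tilde x_i,\tilde y_i) = (x_j,y_j)$; the $j$ case is symmetric.

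For $k\neq i,j$, expand $v_k = C_i v_i + C_j v_j$ with $C_i = (x_jy_k - y_jx_k)/\Delta$ and $C_j = (y_ix_k - x_iy_k)/\Delta$, read off from $\omega_j v_k$ and $\omega_i v_k$. Solving $\Delta R_r(u_k)\,u = v_k$ for $u = A v_i + B v_j$ decouples thanks to the orthogonality relations, giving $A = C_i/(\psi(u_k)\Delta)$, $B = C_j/(\phi(u_k)\Delta)$, whence
\[
R_r(u_k)^{-1} v_k \;\propto\; \phi(u_k)(x_jy_k - y_jx_k)\,v_i + \psi(u_k)(y_ix_k - x_iy_k)\,v_j.
\]
To match the stated theorem one now rewrites $\phi(u_k)$ and $\psi(u_k)$ using the quasi-periodicity $\theta_p(w)=-w\theta_p(1/w)$, which produces $\phi(u_k) = (u_k^{2}/\eta)\theta_p(u_i/u_k,\,\eta/(u_iu_k))/\theta_p(v/u_i,\,vu_i/\eta)$ and its analogue for $\psi(u_k)$. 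A parallel rewriting of the theta products $\Theta_1,\Theta_2$ appearing in the theorem shows that both $u_j\Theta_1/\phi(u_k)$ and $u_i\Theta_2/\psi(u_k)$ collapse to the common scalar $-\eta^{2}/(u_k^{2}v)\cdot\theta_p(v/u_i,\,vu_i/\eta,\,v/u_j,\,vu_j/\eta)$, which is irrelevant projectively. This last quasi-periodicity bookkeeping is the main obstacle: conceptually routine, but it is the only place where the slightly asymmetric grouping of theta arguments chosen by the authors must be matched to the symmetric output of the linear-algebra calculation.
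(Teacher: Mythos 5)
Your proposal is correct and follows essentially the same route as the paper's proof: the paper likewise obtains the new kernels for $k\neq i,j$ by solving $(x_k,y_k)^T=R_r(u_k)(\tilde x_k,\tilde y_k)^T$ and gets the $k=i,j$ cases from the degeneracy of $R_r$ at the swapped singular points, merely leaving the inversion and quasi-periodicity bookkeeping implicit. Your rank-one decomposition $R_r=\Delta^{-1}[\phi\,v_j\omega_i-\psi\,v_i\omega_j]$ is a clean way to organize that computation, and your verification that $u_j\Theta_1/\phi(u_k)=u_i\Theta_2/\psi(u_k)$ checks out.
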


\begin{proof}
Given \eqref{right}, the kernel of $\tilde{B}(u_k)$ may be computed by evaluating the kernel of $B(u_k)R_r(u_k)$ which is equivalent to asking what vector is in the image of the kernel of $B(z)$, i.e., we solve
\[
 \begin{pmatrix} x_k \\ y_k \end{pmatrix}   = R_{r}(u_k) \begin{pmatrix} \tilde{x}_k \\ \tilde{y}_k \end{pmatrix},
\]
which provides a representative of $(\tilde{x}_k:\tilde{y}_k)$ directly. We also need to evaluate the kernel of $\tilde{B}(\eta/qu_i)$ and $\tilde{B}(\eta/qu_j)$, which is just the kernel of $R(u_i)$ and $R(u_j)$ by the symmetry, which is simply given by swapping the kernels of $B(u_i)$ and $B(u_j)$. The resulting matrix, $B(z) R_r(z)$ has entries divisible by $\lambda_l(z) = \theta_p(z/u_i)\theta_p(z/u_j)$, with properties \eqref{C1}, \eqref{C2} and \eqref{C3}, hence, defines the required transformation.
\end{proof}

We may now specify $R_l(z)$ using conditions equivalent to \eqref{R1}, \eqref{R2} and \eqref{R3}. By evaluating \eqref{left} at $z= u_i$ and $u_j$, we have the following conditions:
\begin{align}
\tag{$L_1$} \label{L1}&\begin{array}{p{14cm}} The determinant of $R_l(z)$ vanishes at points $z=u_i$ and $z=u_j$ and is nonzero.\end{array}\\
\tag{$L_2$} \label{L2}&\begin{array}{p{14cm}} The kernels of $R_l(u_i)$ and $R_l(u_j)$ are contained in the images of $B(u_i)$ and $B(u_j)$ respectively.\end{array}\\
\tag{$L_3$} \label{L3}&\begin{array}{p{14cm}} There is a point, $w$, such that $R_l(w) = I$.\end{array}
\end{align}
These conditions are sufficient to determine $R_{l}(z)$, whose entries may be written in terms of $\theta_p(z/u_i,zu_i/\eta)$ and $\theta_p(z/u_i,zu_i/\eta)$. We claim that the following matrix satisfies these conditions:
\begin{align}\label{RL}
&R_l(z) = \dfrac{1}{B_{11}(u_i)B_{21}(u_j) - B_{11}(u_j)B_{21}(u_i)} \begin{pmatrix} C_{11}(z) & C_{12}(z) \\ C_{21}(z) & C_{22}(z) \end{pmatrix},
\end{align}
where
\begin{align}
C_{11}(z) &=  B_{11}\left(u_i\right) B_{21}\left(u_j\right)\frac{\theta_p\left(\frac{z}{u_i},\frac{\eta }{q z u_i}\right)}{\theta_p\left(\frac{w}{u_i},\frac{\eta }{q w
   u_i}\right)}-B_{21}\left(u_i\right) B_{11}\left(u_j\right)\frac{\theta_p\left(\frac{z}{u_j},\frac{\eta }{q z u_j}\right)}{\theta _p\left(\frac{w}{u_j},\frac{\eta
   }{q w u_j}\right)},\\
C_{12}(z) &= B_{11}\left(u_i\right) B_{11}\left(u_j\right) \left(\frac{\theta _p\left(\frac{z}{u_j},\frac{\eta }{q z u_j}\right)}{\theta _p\left(\frac{w}{u_j},\frac{\eta }{q w
   u_j}\right)}-\frac{\theta _p\left(\frac{z}{u_i},\frac{\eta }{q z u_i}\right)}{\theta _p\left(\frac{w}{u_i},\frac{\eta }{q w u_i}\right)}\right),\\
C_{21}(z)&= B_{21}\left(u_i\right) B_{21}\left(u_j\right) \left(\frac{\theta_p\left(\frac{z}{u_i},\frac{\eta }{q z u_i}\right)}{\theta _p\left(\frac{w}{u_i},\frac{\eta }{q w
   u_i}\right)}-\frac{\theta _p\left(\frac{z}{u_j},\frac{\eta }{q z u_j}\right)}{\theta_p\left(\frac{w}{u_j},\frac{\eta }{q w u_j}\right)}\right), \\
C_{22}(z)&= B_{11}\left(u_i\right) B_{21}\left(u_j\right)\frac{\theta _p\left(\frac{z}{u_j},\frac{\eta }{q z u_j}\right)}{\theta _p\left(\frac{w}{u_j},\frac{\eta }{q w
   u_j}\right)}-B_{21}\left(u_i\right) B_{11}\left(u_j\right)\frac{\theta _p\left(\frac{z}{u_i},\frac{\eta }{q z u_i}\right)}{\theta _p\left(\frac{w}{u_i},\frac{\eta }{q w u_i}\right)}.
\end{align}
The determinant of this matrix is
\[
\det R_l(z) = \dfrac{\theta_p\left(\dfrac{z}{u_i},\dfrac{z}{u_j},\dfrac{\eta}{qzu_i},\dfrac{\eta}{qzu_i} \right)}{\theta_p\left(\dfrac{w}{u_i},\dfrac{w}{u_j},\dfrac{\eta}{qwu_i},\dfrac{\eta}{qwu_i} \right)},
\]
where the other properties, \eqref{L2} and \eqref{L3}, follow naturally.

\begin{thm}
The action of $E_{i,j} \cdot (x_k:y_k) = (\hat{x}_k,\hat{y}_k)$, is given by $(\hat{x}_k:\hat{y}_k) = (x_k:y_k)$ for $k \neq i,j$, while for $k = i,j$ we have
\begin{align}\label{fijaction}
(\hat{x}_i:\hat{y}_i) = \biggl(&B_{21}(u_i)B_{21}\left(\dfrac{\eta}{qu_i}\right)-B_{11}(u_i)B_{22}\left(\dfrac{\eta}{qu_i} \right): \\ 
&\left. B_{11}(u_i) B_{21}\left( \dfrac{\eta}{qu_i} \right)- B_{21}(u_i)B_{11}\left(\dfrac{\eta}{qu_i} \right)\right)  \nonumber \\
(\hat{x}_j:\hat{y}_j) = \biggl(& B_{21}(u_j)B_{21}\left(\dfrac{\eta}{qu_j}\right)-B_{11}(u_j)B_{22}\left(\dfrac{\eta}{qu_j} \right): \\ 
&\left. B_{11}(u_j) B_{21}\left( \dfrac{\eta}{qu_j} \right)- B_{21}(u_j)B_{11}\left(\dfrac{\eta}{qu_j} \right)\right). \nonumber
\end{align}
\end{thm}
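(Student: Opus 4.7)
The plan is to mirror the proof just given for the $F_{i,j}$ theorem, but now using the left action. Define $\hat{B}(z) := R_l(z)B(z)/\lambda_l(z)$ with $\lambda_l(z) = \theta_p(z/u_i,z/u_j)$; this is (up to a left scalar gauge) the $B$-matrix representing $E_{i,j}\cdot A(z)$. The divisibility argument for the determinant spelled out before the theorem shows that $\det\hat{B}(z)$ vanishes at $\hat{u}_k = u_k$ for $k \neq i,j$ and at $\hat{u}_i = \eta/(qu_i)$, $\hat{u}_j = \eta/(qu_j)$. The $(\hat{x}_k:\hat{y}_k)$ are by definition the generators of $\ker\hat{B}(\hat{u}_k)$, and the theorem amounts to computing these at each of the $2m+6$ singular points.

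For indices $k \neq i,j$, both $\lambda_l(u_k) = \theta_p(u_k/u_i,u_k/u_j)$ and $\det R_l(u_k)$ are generically nonzero, so $\hat{B}(u_k) = (R_l(u_k)/\lambda_l(u_k))\,B(u_k)$ equals $B(u_k)$ multiplied by an invertible left factor. Hence $\ker \hat{B}(u_k) = \ker B(u_k) = \langle (x_k,y_k)\rangle$, giving the trivial half of the statement, $(\hat{x}_k:\hat{y}_k)=(x_k:y_k)$.

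For $k=i$ (the case $k=j$ is identical), I would use the symmetry $R_l(z)=R_l(\eta/(qz))$ to rewrite $\hat{B}(\eta/(qu_i)) \propto R_l(u_i)\,B(\eta/(qu_i))$, with a nonzero proportionality constant coming from $\lambda_l(\eta/(qu_i))$. Thus $\hat{v}$ lies in $\ker\hat{B}(\eta/(qu_i))$ precisely when $B(\eta/(qu_i))\hat{v} \in \ker R_l(u_i)$. To characterise this kernel, evaluate the defining relation $\lambda_l(z)\hat{B}(z) = R_l(z)B(z)$ at $z=u_i$, where the left side vanishes: this forces $R_l(u_i)B(u_i)=0$, so $\mathrm{Im}\,B(u_i) \subseteq \ker R_l(u_i)$, and because both subspaces are one-dimensional (by \eqref{C1} and \eqref{L1}) the inclusion is an equality. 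Taking $(B_{11}(u_i),B_{21}(u_i))^{T}$ as a generator of $\mathrm{Im}\,B(u_i)$, the condition becomes $B(\eta/(qu_i))\hat{v} = \mu(B_{11}(u_i),B_{21}(u_i))^{T}$ for some scalar $\mu$; eliminating $\mu$ between the two scalar rows gives directly the claimed projective formula for $(\hat{x}_i:\hat{y}_i)$.

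The main obstacle is purely conceptual rather than computational: one has to recognise that property \eqref{L2}, combined with dimensional considerations, pins down $\ker R_l(u_i)$ as exactly $\mathrm{Im}\,B(u_i)$, which is the step that translates the abstract left-action of $R_l$ into a concrete formula on the kernel data. Once this identification is in place the remainder is a routine $2\times 2$ Cramer-style elimination; in particular no nontrivial theta-function identities are needed beyond those already used in constructing $R_l$ and verifying its symmetry.
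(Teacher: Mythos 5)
Your proposal is correct and follows essentially the same route as the paper: left multiplication leaves the kernels at $z=u_k$ ($k\neq i,j$) untouched, the new singularities sit at $\eta/(qu_i)$ and $\eta/(qu_j)$, the symmetry of $R_l$ reduces the computation to $\ker\bigl(R_l(u_i)B(\eta/(qu_i))\bigr)$, and the identification $\ker R_l(u_i)=\mathrm{Im}\,B(u_i)=\langle (B_{11}(u_i),B_{21}(u_i))\rangle$ turns this into solving $B(\eta/(qu_i))\hat{v}=(B_{11}(u_i),B_{21}(u_i))^{T}$, which is exactly the paper's final step. Your derivation of $\ker R_l(u_i)=\mathrm{Im}\,B(u_i)$ from the vanishing of $\lambda_l(u_i)$ is just a restatement of how condition \eqref{L2} was obtained, so there is no substantive difference.
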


\begin{proof}
It is trivial to see that the multiplication of a matrix on the left leaves the kernels of $B(z)$ at $z=x_k$ for $k\neq i,j$ unchanged, hence,  $(\hat{x}_k:\hat{y}_k) = (x_k:y_k)$ for $k \neq i,j$. For $k =i,j$, we note that $E_{i,j}B(z)$ is singular at $z = \eta/qu_i$ and $z= \eta/qu_j$, hence, we need to evaluate the kernel of $R_l(\eta/qu_i)B(\eta/qu_i)$ and $R_l(\eta/qu_j)B(\eta/qu_j)$. Since $B(\eta/qu_i)$ and $B(\eta/qu_j)$ are not singular and the kernels of $R_l(u_i)$ and $R_l(u_j)$ are $\langle (B_{11}(u_i): B_{21}(u_i)) \rangle$ and $\langle (B_{11}(u_j): B_{21}(u_j)) \rangle$, we obtain the kernel of $F_{i,j} B(z)$ at $z= \eta/qu_i$ and $z= \eta/qu_j$ by solving
\[
\begin{pmatrix} B_{11}(u_i) \\ B_{21}(u_i) \end{pmatrix} = B\left(\dfrac{\eta}{qu_i}\right) \begin{pmatrix} \hat{x}_i \\ \hat{y}_i \end{pmatrix}, \hspace{1cm} \begin{pmatrix} B_{11}(u_j) \\ B_{21}(u_j) \end{pmatrix} = B\left(\dfrac{\eta}{qu_j}\right) \begin{pmatrix} \hat{x}_j \\ \hat{y}_j \end{pmatrix},
\] 
which is equivalent to \eqref{fijaction}.
\end{proof}

It should be noted that the conditions \eqref{R1}, \eqref{R2} and \eqref{R3}, and similarly \eqref{L1}, \eqref{L2} and \eqref{L3}, are symmetric in $i$ and $j$, which means that 
\begin{align}
E_{i,i+1} = E_{i+1,i} = s_i \circ E_{i,i+1} = E_{i,i+1} \circ s_i,\\
F_{i,i+1} = F_{i+1,i} = s_i \circ F_{i,i+1} = F_{i,i+1} \circ s_i.
\end{align}
Furthermore, we may also infer that the conditions that specify matrices inducing $E_{i,i+2}$ and $F_{i,i+2}$ are the same as those for $E_{i,i+1}$ and $F_{i,i+1}$ when we swap $i+1$ with $i+2$, hence
\begin{align}
E_{i,i+2} = E_{i,i+1} \circ s_i \circ E_{i,i+1},\\
E_{i,i+2} = F_{i,i+1}\circ s_i \circ F_{i,i+1}.
\end{align}
We now have two canonical types of transformations, $E_{i,j}$ and $F_{i,j}$ generated by one $E_{i,j}$ and $F_{i,j}$ and the symmetric group, $S_{2m+6}$. It is a straightforward calculation to see from \eqref{eijaction} and \eqref{fijaction} that $E_{i,j}$ and $F_{i,j}$ are involutions. Furthermore, we can see that the composition
\[
T_{i,j} = F_{i,j} \circ E_{i,j} \cdot u_i, u_j \to q u_i, qu_j,
\]
is of infinite order, hence, the transformations, $E_{i,j}$ and $F_{i,j}$, generate an infinite dihedral group.

One additional transformation we wish to consider, which we denote $\iota$, is a transformation has a nontrivial action on $\eta$, which by definition is the value such that \eqref{Asym} holds. We claim that $\iota$ is induced by transformation of the form \eqref{Rlinear} where $R(z) = B(\eta/z)$.

\begin{thm}
The transformation induced by \eqref{Rlinear} where $R(x) = B(\eta/z)$ specified by 
\begin{align}
 &\iota \cdot \eta \to q \eta,\\
 &\iota \cdot u_i \to \dfrac{\eta}{u_i}, \\
 &\iota \cdot (x_k : y_k) = (B_{11}(u_k), B_{21}(u_k)).\\
 & \iota \cdot v= w, \hspace{1cm} \iota \cdot w = qv.
\end{align}
\end{thm}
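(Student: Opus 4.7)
The plan is to apply the compatibility condition \eqref{asymcomp} to $R(z) = B(\eta/z)$ and read off each transformed parameter, handling the four claims of the theorem in turn.

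First, substituting $R(z) = B(\eta/z)$ into \eqref{asymcomp} and using the factorization \eqref{product} yields
\[
\tilde A(z) = R(qz)\, A(z)\, R(z)^{-1} = B(\eta/qz)\cdot B(\eta/qz)^{-1} B(z) \cdot B(\eta/z)^{-1} = B(z)\,B(\eta/z)^{-1}.
\]
The identity $\tilde A(z)\,\tilde A(\eta/z) = B(z)B(\eta/z)^{-1}\cdot B(\eta/z)B(z)^{-1} = I$ is immediate, which verifies the new symmetric property with $\tilde\eta/q = \eta$; this establishes $\iota\cdot\eta = q\eta$.

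Second, to identify $\tilde u_i$ and $(\tilde x_k, \tilde y_k)$, I use the candidate $\tilde B(z) = \mathrm{adj}(B(\eta/z))$ (rescaled by an appropriate scalar function to match multipliers for the new system). Since $\det\mathrm{adj}(M) = \det M$ in dimension two, $\det\tilde B(z) = \det B(\eta/z)$ has simple zeros precisely at $z = \eta/u_k$, so $\tilde u_k = \eta/u_k$. Moreover, since $B(u_k)$ is rank one with kernel $\langle(x_k,y_k)\rangle$ and image $\langle(B_{11}(u_k), B_{21}(u_k))\rangle$, the adjugate $\mathrm{adj}(B(u_k))$ has kernel equal to the image of $B(u_k)$; hence $\ker\tilde B(\tilde u_k) = \langle(B_{11}(u_k), B_{21}(u_k))\rangle$, which is the claimed action on $(x_k:y_k)$. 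The consistency check, using $\mathrm{adj}(\mathrm{adj}(M)) = M$ for $2\times 2$ matrices, gives
\[
\tilde B(\eta/z)^{-1}\tilde B(z) = \mathrm{adj}(B(z))^{-1}\mathrm{adj}(B(\eta/z)) = \frac{\det B(\eta/z)}{\det B(z)}\,B(z)\,B(\eta/z)^{-1},
\]
which reproduces $\tilde A(z)$ up to the scalar factor $\det B(\eta/z)/\det B(z)$; such a scalar gauge is trivial for the Galois-theoretic transformation, as noted in the discussion preceding \eqref{left}.

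Third, the action $\iota\cdot v = w$ and $\iota\cdot w = qv$ is obtained by tracking how the normalization conditions \eqref{R3} and \eqref{L3} transform under the substitution induced by $R(z) = B(\eta/z)$. In the new system the right-symmetry $R_r(z) = R_r(\tilde\eta/z) = R_r(q\eta/z)$ coincides structurally with the old left-symmetry $R_l(z) = R_l(\eta/qz)$ under the interchange $z \leftrightarrow \eta/z$, and vice versa; the factor of $q$ in $\iota\cdot w = qv$ arises from the distinction between $\tilde\eta$ and $\tilde\eta/q$ in the two symmetries, combined with the shift $\tilde\eta = q\eta$ established in the first step.

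The main obstacle is the multiplier bookkeeping implicit in the second step: $\mathrm{adj}(B(\eta/z))$ must be rescaled by a suitable scalar so that its entries genuinely lie in the space of theta functions with the leading multiplier $\tilde L$ for the new system satisfying $\tilde L^2 = \prod_k \tilde u_k = \eta^{2m+6}/L^2$ (per \eqref{Ldef}). This is a direct but careful exercise in tracking how $\theta_p$-multipliers transform under $z \mapsto \eta/z$; no deeper identity is required beyond those already used.
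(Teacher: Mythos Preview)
Your approach is essentially the same as the paper's. Both derive $\tilde A(z)=B(z)B(\eta/z)^{-1}$ from \eqref{asymcomp} and \eqref{product}, both verify $\tilde\eta=q\eta$ from the symmetry $\tilde A(z)\tilde A(\eta/z)=I$, and both identify $\iota B(z)$ with (a left-constant multiple of) $\mathrm{adj}\,B(\eta/z)$ to read off the new singular points $\eta/u_i$ and the new kernels $(B_{11}(u_k):B_{21}(u_k))$. Two minor differences: the paper multiplies on the left by a constant matrix $N$ to restore condition \eqref{C3} on the images at $u_{2m+3},u_{2m+4},u_{2m+5}$, which you do not mention (your ``scalar'' is about the theta multiplier, not this normalization), though this is harmless for the kernel computation; conversely, you add a consistency check via $\mathrm{adj}(\mathrm{adj}\,M)=M$ which the paper omits. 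On the claim $\iota\cdot v=w$, $\iota\cdot w=qv$, your sketch is heuristic, but the paper's proof does not address this point at all, so you are not omitting anything the paper provides.
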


\begin{proof}
If we allow $R(z) = B(\eta/z)$, then by \eqref{asymcomp} then 
\begin{equation}
(\iota A(z)) = B(z) B\left( \dfrac{\eta}{z} \right)^{-1},
\end{equation}
hence, we have 
\begin{equation}
(\iota A(z)) \left(\iota A\left( \dfrac{\eta}{qz} \right) \right) = B(z)  B\left( \dfrac{\eta}{z} \right)^{-1} B\left( \dfrac{\tilde{\eta}}{qz}\right)
\end{equation}
where we have denoted $\tilde{\eta} = \iota \eta = q \eta$. Up to some scalar, we have that
\[
\iota B(z) = N \begin{pmatrix} B_{22}\left( \dfrac{\eta}{z} \right) & -B_{12}\left( \dfrac{\eta}{z} \right)\\
-B_{21}\left( \dfrac{\eta}{z} \right) & B_{11}\left( \dfrac{\eta}{z} \right)
\end{pmatrix} 
\]
where $N$ is some left multiplicative factor chosen so that the images of $B(z)$ at $z = u_{2m+3}$, $u_{2m+4}$ and $u_{2m+5}$ are $\langle (1,1)\rangle$, $\langle (0,1)\rangle$ and $\langle (1,0)\rangle$ respectively. This matrix is clearly singular at $z = \eta/u_i$ for $i = 0, \ldots, 2m+5$. Furthermore, an element of the  kernel at $z = \eta/u_k$ is given by annihilating either the first or second row, i.e.,  
\[
\iota (x_k:y_k) = (B_{11}(u_k): B_{21}(u_k)) = (B_{12}(u_k): B_{22}(u_k)).
\] 
Using the second row is equivalent because $\det B(u_k) = 0$. 
\end{proof}

We may now define the system we wish to call the ellitpic Garnier system. 

\begin{defn}
The elliptic Garnier system is the system of birational transformations generated by $\iota$ and $E_{i,j}$ and $F_{i,j}$ for $(i,j) \in \mathbb{Z}_{2m+6}^2$.
\end{defn}

The group generated by $S_8$, the $E_{i,j}$ and $F_{i,j}$ for $i,j \in \mathbb{Z}_{2m+6}$, is an affine Weyl group of type $D_{2m+6}^{(1)}$. The transformation, $\iota$, in this setting acts as a Dynkin diagram automorphism. The Dynkin diagram for $W(D_{n}^{(1)})$ is shown in Figure \ref{fig}. If we include $\iota$ in the group, we obtain an extended affine Weyl group of type $D_{2m+6}$. This means that the symmetry group is
\[
\Lambda_{2m+7}\rtimes W(D_{2m+6}),
\]
where $\Lambda_{2m+7}$ is an appropriate lattice of rank $2m+7$ having $\mathbb{Z} \times D_{2m+6}$ as an index 2 sublattice. More details on a geometric level concerning this group can be found in \cite{rains:noncomgeom}.

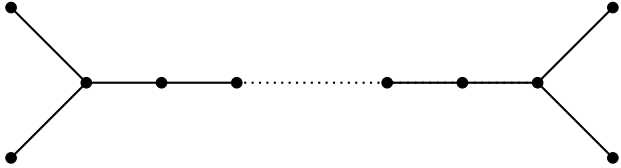
\begin{figure}
\begin{tikzpicture}
\filldraw (0,1) circle(2pt);
\filldraw (1,0) circle(2pt);
\filldraw (0,-1) circle(2pt);
\filldraw (8,-1) circle(2pt);
\filldraw (7,0) circle(2pt);
\filldraw (8,1) circle(2pt);
\draw[thick] (0,1) -- (1,0);
\draw[thick] (0,-1) -- (1,0);
\draw[thick] (8,1) -- (7,0);
\draw[thick] (8,-1) -- (7,0);
\draw[thick] (1,0) -- (3,0);
\filldraw (2,0) circle(2pt);
\filldraw (3,0) circle(2pt);
\draw[thick] (5,0) -- (7,0);
\filldraw (6,0) circle(2pt);
\filldraw (5,0) circle(2pt);
\draw[dotted,thick] (3,0) -- (7,0);
\end{tikzpicture}
\caption{The Dynkin diagram for $D_n^{(1)}$ for $n \geq 4$.\label{fig}}
\end{figure}

\section{The case $m=1$ and the elliptic Painlev\'e equation}\label{ellipticPE8}

The elliptic Painlev\'e equation lies at the top of the Painlev\'e hierarchy \cite{Sakai:Rational}. It is characterized by having a surface of initial conditions obtained by blowing up $(\mathbb{P}^1)^2$ at eight points and an irreducible anticanonical divisor \cite{Sakai:Rational}. Furthermore, the elliptic Painlev\'e equation possesses an elegant geometric description in terms of moving pencils of elliptic curves \cite{Elliptichypergeomtric}. 

It is known that the elliptic Painlev\'e equation possesses a Lax pair \cite{rains:isomonodromy, Yamada2009} whose most explicit construction was inspired by work on Pad\'e approximations and the geometric description of the elliptic Painlev\'e equation \cite{NoumiYamada:ellE8Lax}. Our work has been guided by the recent work on generalized Hitchin systems \cite{Rains2013}, biorthogonal polynomials \cite{rains:isomonodromy} and a recent geometric description for the elliptic Painlev\'e equations in terms of noncommutative geometry \cite{Okounkov2014, rains:noncomgeom}. These works specify that the moduli space of elliptic difference equations satisfying \eqref{C1}, \eqref{C2} and \eqref{C3} for $m=1$ may be identified with the rational surface of initial conditions for the elliptic Painlev\'e equation. One of the novel features of this work is that we are able to be very explicit and demonstrate a range of symmetries directly from the linear problem.

We have mentioned that $B(z)$ is only defined up to constant matrix multiplication on the left, and a constant gauge transformation of $A(z)$ corresponds to multiplication on the right by some constant matrix, hence, we have that up to constant gauge transformations of $A(z)$, $B(z)$ is only defined up to multiplication on the left and right by constant matrices. This means we may additionally fix three kernel vectors, leaving two kernel vectors as parameters in the case of $m=1$; if the kernels of $B(z)$ at $z= u_i$, $u_j$ and $u_k$ are $\langle (x_i,y_i) \rangle$, $\langle (x_j,y_j) \rangle$ and $\langle (x_k,y_k) \rangle$, then so long as these are trivially intersecting subspaces, the kernels of
\begin{equation}\label{rigidify}
\bar{B}(z) = B(z) N_{ijk},
\end{equation} 
 at $z=a_i$, $a_j$ and $a_k$, where $N_{ijk}$ is the matrix 
\begin{equation}
N_{ijk} = \begin{pmatrix}
 x_i \left(x_j y_k-x_k y_j\right) & x_j \left(x_k y_i-x_i y_k\right) \\
 y_i \left(x_j y_k-x_k y_j\right) & y_j \left(x_k y_i-x_i y_k\right),
\end{pmatrix},
\end{equation}
are $\langle (1,0) \rangle$, $\langle (0,1) \rangle$ and $\langle (1,1) \rangle$ respectively. This means that without loss of generality, we may fix three kernels, $B(u_0)$, $B(u_1)$ and $B(u_2)$ so that they are $\langle (1,0)\rangle$, $\langle (0,1) \rangle$ and $\langle (1,1) \rangle$ respectively. The remaining kernels, $\langle (x_3,y_4) \rangle$ and $\langle (x_4,y_4) \rangle$ are sufficient to parameterize $B(z)$. In this specialization, the precise parameterization of $B(z)$ is given by
\begin{subequations}\label{BE8}
\begin{align}
B_{11}(z) =& 
\frac{x_4 y_3 \theta_p \left(\frac{u_1 u_2 u_3 u_6}{L},\frac{z}{u_0},\frac{z}{u_4},\frac{z}{u_6},\frac{u_0 u_4 u_6 z}{L}\right)}{u_1^3 u_2^3 u_3^3 u_6 \theta_p \left(\frac{u_0}{u_1},\frac{u_0}{u_2},\frac{u_0}{u_3},\frac{u_4}{u_1},\frac{u_4}{u_2},\frac{u_4}{u_3},\frac{u_5}{u_6}\right)}
+\frac{x_3 y_4 \theta_p \left(\frac{u_1 u_2 u_4 u_6}{L},\frac{z}{u_0},\frac{z}{u_3},\frac{z}{u_6},\frac{u_0 u_3 u_6 z}{L}\right)}{u_1^3 u_2^3 u_4^3 u_6 \theta_p \left(\frac{u_0}{u_1},\frac{u_0}{u_2},\frac{u_3}{u_1},\frac{u_3}{u_2},\frac{u_0}{u_4},\frac{u_3}{u_4},\frac{u_5}{u_6}\right)} \\
&+\frac{y_3 y_4 \theta_p \left(\frac{u_1 u_3 u_4 u_6}{L},\frac{z}{u_0},\frac{z}{u_2},\frac{z}{u_6},\frac{u_0 u_2 u_6 z}{L}\right)}{u_1^3 u_3^3 u_4^3 u_6 \theta_p \left(\frac{u_0}{u_1},\frac{u_2}{u_1},\frac{u_0}{u_3},\frac{u_2}{u_3},\frac{u_0}{u_4},\frac{u_2}{u_4},\frac{u_5}{u_6}\right)},\nonumber
\end{align}
\vspace{-.7cm}
\begin{align}
B_{12}(z) =&
  -\frac{x_3 y_4 \theta_p \left(\frac{u_0 u_2 u_3 u_6}{L},\frac{z}{u_1},\frac{z}{u_4},\frac{z}{u_6},\frac{u_1 u_4 u_6 z}{L}\right)}{u_0 u_1^3 u_2 u_3 u_4^3 u_6 \theta_p \left(\frac{u_0}{u_1},\frac{u_2}{u_1},\frac{u_3}{u_1},\frac{u_0}{u_4},\frac{u_2}{u_4},\frac{u_3}{u_4},\frac{u_5}{u_6}\right)}
  -\frac{x_4 y_3 \theta_p \left(\frac{u_0 u_2 u_4 u_6}{L},\frac{z}{u_1},\frac{z}{u_3},\frac{z}{u_6},\frac{u_1 u_3 u_6 z}{L}\right)}{u_0 u_1^3 u_2 u_3^3 u_4 u_6 \theta_p \left(\frac{u_0}{u_1},\frac{u_2}{u_1},\frac{u_0}{u_3},\frac{u_2}{u_3},\frac{u_4}{u_1},\frac{u_4}{u_3},\frac{u_5}{u_6}\right)}\\
 & -\frac{x_3 x_4 \theta_p \left(\frac{u_0 u_3 u_4 u_6}{L},\frac{z}{u_1},\frac{z}{u_2},\frac{z}{u_6},\frac{u_1 u_2 u_6 z}{L}\right)}{u_0 u_1^3 u_2^3 u_3 u_4 u_6 \theta_p \left(\frac{u_0}{u_1},\frac{u_0}{u_2},\frac{u_3}{u_1},\frac{u_3}{u_2},\frac{u_4}{u_1},\frac{u_4}{u_2},\frac{u_5}{u_6}\right)},\nonumber
\end{align}
\vspace{-.7cm}
\begin{align}
B_{21}(z) =&
\frac{x_4 y_3 \theta_p \left(\frac{u_1 u_2 u_3 u_7}{L},\frac{z}{u_0},\frac{z}{u_4},\frac{z}{u_7},\frac{u_0 u_4 u_7 z}{L}\right)}{u_1^3 u_2^3 u_3^3 u_7 \theta_p \left(\frac{u_0}{u_1},\frac{u_0}{u_2},\frac{u_0}{u_3},\frac{u_4}{u_1},\frac{u_4}{u_2},\frac{u_4}{u_3},\frac{u_5}{u_7}\right)}
+\frac{x_3 y_4 \theta_p \left(\frac{u_1 u_2 u_4 u_7}{L},\frac{z}{u_0},\frac{z}{u_3},\frac{z}{u_7},\frac{u_0 u_3 u_7 z}{L}\right)}{u_1^3 u_2^3 u_4^3 u_7 \theta_p \left(\frac{u_0}{u_1},\frac{u_0}{u_2},\frac{u_3}{u_1},\frac{u_3}{u_2},\frac{u_0}{u_4},\frac{u_3}{u_4},\frac{u_5}{u_7}\right)}\\
&+\frac{y_3 y_4 \theta_p \left(\frac{u_1 u_3 u_4 u_7}{L},\frac{z}{u_0},\frac{z}{u_2},\frac{z}{u_7},\frac{u_0 u_2 u_7 z}{L}\right)}{u_1^3 u_3^3 u_4^3 u_7 \theta_p \left(\frac{u_0}{u_1},\frac{u_2}{u_1},\frac{u_0}{u_3},\frac{u_2}{u_3},\frac{u_0}{u_4},\frac{u_2}{u_4},\frac{u_5}{u_7}\right)},\nonumber
\end{align}
\vspace{-.7cm}
\begin{align}
B_{22}(z) =&
  -\frac{x_3 y_4 \theta_p \left(\frac{u_0 u_2 u_3 u_7}{L},\frac{z}{u_1},\frac{z}{u_4},\frac{z}{u_7},\frac{u_1 u_4 u_7 z}{L}\right)}{u_0 u_1^3 u_2 u_3 u_4^3 u_7 \theta_p \left(\frac{u_0}{u_1},\frac{u_2}{u_1},\frac{u_3}{u_1},\frac{u_0}{u_4},\frac{u_2}{u_4},\frac{u_3}{u_4},\frac{u_5}{u_7}\right)}
 & -\frac{x_4 y_3 \theta_p \left(\frac{u_0 u_2 u_4 u_7}{L},\frac{z}{u_1},\frac{z}{u_3},\frac{z}{u_7},\frac{u_1 u_3 u_7 z}{L}\right)}{u_0 u_1^3 u_2 u_3^3 u_4 u_7 \theta_p \left(\frac{u_0}{u_1},\frac{u_2}{u_1},\frac{u_0}{u_3},\frac{u_2}{u_3},\frac{u_4}{u_1},\frac{u_4}{u_3},\frac{u_5}{u_7}\right)}\\
 & -\frac{x_3 x_4 \theta_p \left(\frac{u_0 u_3 u_4 u_7}{L},\frac{z}{u_1},\frac{z}{u_2},\frac{z}{u_7},\frac{u_1 u_2 u_7 z}{L}\right)}{u_0 u_1^3 u_2^3 u_3 u_4 u_7 \theta_p \left(\frac{u_0}{u_1},\frac{u_0}{u_2},\frac{u_3}{u_1},\frac{u_3}{u_2},\frac{u_4}{u_1},\frac{u_4}{u_2},\frac{u_5}{u_7}\right)}.\nonumber
\end{align}
\end{subequations}
This also suggests that the canonical translation should act on $u_3$ and $u_4$. One issue is that the actions specified in the previous section do not fix the kernels $\langle (1,0) \rangle$, $\langle (0,1) \rangle$ and $\langle (1,1) \rangle$. To specify a translation, we must first act by $T_{3,4}$ and then act by a further constant matrix that sends $\langle (1,0) \rangle$, $\langle (0,1)\rangle$ and $\langle (1,1)\rangle$ to the new kernel vectors at $B(u_0)$, $B(u_1)$ and $B(u_2)$ respectively. Furthermore, we may consider these as maps on $(\mathbb{P}^1)^2$, since the other kernels are fixed by this step. Let us denote the actions of $E_{3,4}$ and $F_{3,4}$ with an additional normalization step by $\bar{E}_{3,4}$ and $\bar{F}_{3,4}$ respectively. We specify $\bar{T}_{3,4}$ by specifying the actions of $\bar{E}_{3,4}$ and $\bar{F}_{3,4}$, to give
\begin{align*}
\bar{E}_{3,4} \cdot & (\mathbb{P}^1)^2  \to (\mathbb{P}^1)^2,\\
\bar{F}_{3,4} \cdot &(\mathbb{P}^1)^2  \to (\mathbb{P}^1)^2,\\
\bar{T}_{3,4} \cdot &(\mathbb{P}^1)^2  \to (\mathbb{P}^1)^2,\\
:& ((x_3:y_3), (x_4:y_4)) \to  ((\hat{\tilde{x}}_3:\hat{\tilde{y}}_3), (\hat{\tilde{x}}_4:\hat{\tilde{y}}_4)).
\end{align*}
which we claim to be the elliptic Painlev\'e equation. 

\begin{lem}
The action of $\bar{E}_{3,4}$ is given by
\begin{align}\label{E01E8}
&\bar{E}_{3,4} \cdot ((x_3:y_3),(x_4:y_4)) \to ((\tilde{x}_3:\tilde{y}_3),(\tilde{x}_4:\tilde{y}_4)),
\end{align}
where these values are related by
\begin{align}\label{E01E80}
(\tilde{x}_3:\tilde{y}_3)=
& \left( x_4 \theta_p \left(\frac{\eta }{u_1 u_3},\frac{u_3}{u_1}\right) \left(y_3 \left(x_4-y_4\right) \theta_p \left( \frac{\eta }{u_2 u_3},\frac{u_3}{u_2},\frac{\eta }{u_0 u_4},\frac{u_4}{u_0}\right)\right. \right.\\
&\left.\left.+\left(y_3-x_3\right) y_4 \theta_p \left(\frac{\eta }{u_0 u_3},\frac{u_3}{u_0},\frac{\eta }{u_2 u_4},\frac{u_4}{u_2}\right)\right) \right.\nonumber\\
   &: y_4 \theta_p \left(\frac{\eta }{u_0 u_3},\frac{u_3}{u_0}\right) \left(x_3 \left(x_4-y_4\right) \theta_p \left(\frac{\eta }{u_2 u_3},\frac{u_3}{u_2},\frac{\eta }{u_1 u_4},\frac{u_4}{u_1}\right) \right.\nonumber\\
   & \left.\left.+x_4 \left(y_3-x_3\right) \theta_p \left(\frac{\eta }{u_1 u_3},\frac{u_3}{u_1},\frac{\eta }{u_2 u_4},\frac{u_4}{u_2}\right)\right) \right)\nonumber
\end{align}
and
\begin{align}\label{E01E81}
(\tilde{x}_4:\tilde{y}_4)=
&\left(  x_3 \theta_p \left(\frac{\eta }{u_1 u_4},\frac{u_4}{u_1}\right) \left(y_3 \left(y_4-x_4\right) \theta_p \left(\frac{\eta }{u_2 u_3},\frac{u_3}{u_2},\frac{\eta }{u_0 u_4},\frac{u_4}{u_0}\right)\right.\right.\\
& \left.+\left(x_3-y_3\right) y_4 \theta_p \left(\frac{\eta }{u_0 u_3},\frac{u_3}{u_0},\frac{\eta }{u_2 u_4},\frac{u_4}{u_2}\right)\right) \nonumber\\
&\left.: y_3 \theta_p \left(\frac{\eta }{u_0 u_4},\frac{u_4}{u_0}\right) \left(x_3 \left(y_4-x_4\right) \theta_p \left(\frac{\eta }{u_2 u_3},\frac{u_3}{u_2},\frac{\eta }{u_1 u_4},\frac{u_4}{u_1}\right)\right.\right.\nonumber\\
&\left.\left.+x_4 \left(x_3-y_3,\frac{\eta }{u_1 u_3},\frac{u_3}{u_1},\frac{\eta }{u_2 u_4},\frac{u_4}{u_2}\right)\right) \right)\nonumber
\end{align}
\end{lem}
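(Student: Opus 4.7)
The plan is to specialize the general formula \eqref{fijaction} to $(i,j) = (3,4)$, using the explicit expressions for the entries of $B(z)$ given in \eqref{BE8}.

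First I would verify that $\bar{E}_{3,4}$ and $E_{3,4}$ induce the same map on $(x_3:y_3)$ and $(x_4:y_4)$. The theorem preceding \eqref{fijaction} establishes that $E_{i,j}$ acts trivially on $(x_k:y_k)$ for $k \neq i,j$, so the three normalized kernels at $u_0, u_1, u_2$ are automatically preserved by $E_{3,4}$. Hence the extra normalization step defining $\bar{E}_{3,4}$ acts as the identity on the two remaining coordinates, and it suffices to evaluate \eqref{fijaction} directly with $i=3$, $j=4$.

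Next I would substitute $z = u_3$ into $B_{11}(z)$ and $B_{21}(z)$ from \eqref{BE8}. In each entry the summand carrying the coefficient $x_3 y_4$ contains the factor $\theta_p(u_3/u_3) = 0$ and drops out, leaving a two-term expression linear in $x_4 y_3$ and $y_3 y_4$. For the evaluations of $B_{11}, B_{21}, B_{22}$ at $z = \eta/(qu_3)$ no summand vanishes, but each argument of the form $\theta_p(z/u_k)$ or $\theta_p(u_\ast u_{\ast\ast}z/L)$ can be rewritten, using the quasi-periodicity \eqref{thetaidentity} together with $L^2 = \prod_j u_j$ from \eqref{Ldef}, as a $\theta_p$ of the quantities $\eta/(u_3 u_k)$ and $u_3/u_k$ that appear in \eqref{E01E80}.

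Inserting these expressions into \eqref{fijaction} yields a bilinear combination of twelve theta-function products for each homogeneous coordinate of $(\tilde{x}_3 : \tilde{y}_3)$, obtained from the two surviving summands in $B_{11}(u_3)$ and $B_{21}(u_3)$ times the three summands in each of $B_{11}(\eta/qu_3), B_{21}(\eta/qu_3), B_{22}(\eta/qu_3)$. The principal obstacle is the collapse of this expression to the two-term bracket form of \eqref{E01E80}. The $u_5, u_6, u_7$ dependence, which enters through $L$ in the numerator and through the normalization prefactors in the denominator, cancels identically in the projective quotient, and the remaining four-term reduction eliminates the auxiliary parameter $u_2$ through repeated application of the three-term theta identity \eqref{addition}, which is the basic relation for simplifying theta-product sums on $\mathbb{C}^*/\langle p \rangle$. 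The companion formula \eqref{E01E81} for $(\tilde{x}_4 : \tilde{y}_4)$ then follows immediately from the manifest symmetry of \eqref{BE8} and \eqref{fijaction} under the interchange $u_3 \leftrightarrow u_4$, $(x_3 : y_3) \leftrightarrow (x_4 : y_4)$.
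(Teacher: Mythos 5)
There is a genuine gap, and it comes from a misidentification of which deformation $\bar{E}_{3,4}$ is. You read $E_{3,4}$ as the left-multiplication action of \eqref{left}, so that by the theorem containing \eqref{fijaction} the kernels at $u_0,u_1,u_2$ are untouched, the normalization step is vacuous, and it remains to evaluate \eqref{fijaction} with the entries \eqref{BE8}. But that computation is precisely the content of the \emph{next} lemma, the one for $\bar{F}_{3,4}$, whose answer visibly retains the dependence on $u_5,u_6,u_7$, $L$ and $q$ through the $B$-entries. The formulas \eqref{E01E80} and \eqref{E01E81} that you are asked to prove contain none of those parameters, are cubic rather than linear in $(x_3,y_3,x_4,y_4)$, and do not arise from \eqref{fijaction}: the wholesale cancellation of the $u_5,u_6,u_7$, $L$ and $q$ dependence that you invoke at the end does not occur, and no amount of applying \eqref{addition} will produce it. (A sanity check: if $\bar E_{3,4}$ and $\bar F_{3,4}$ both came from the left action, their composition $\bar T_{3,4}$ could not translate $u_3,u_4$ by $q$.) Admittedly the $E/F$ labels are used inconsistently between \S4 and \S5, but the stated formulas leave no ambiguity about which map is meant.

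What the paper does, and what the target formulas force, is to take $\bar E_{3,4}$ to be the deformation induced by \emph{right} multiplication by $R_r$, i.e.\ the one whose effect on the kernels is \eqref{eijaction}, followed by the renormalization \eqref{rigidify}. Concretely: apply \eqref{eijaction} with $i=3$, $j=4$ and $k=0,1,2$ to the normalized kernels $(1{:}0)$, $(0{:}1)$, $(1{:}1)$, obtaining three transformed kernel vectors that depend on $v$; observe that the kernels at the two moved points are simply swapped to $(x_4{:}y_4)$ and $(x_3{:}y_3)$; build $N_{012}$ from the three transformed kernels; and solve $(x_4,y_4)^{T}=N_{012}\,(\tilde x_3,\tilde y_3)^{T}$, and similarly with $(x_3,y_3)$ for the other point. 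The $v$-dependence cancels only in this last step, and it is this step that produces the cubic structure of \eqref{E01E80}. Your closing remark that \eqref{E01E81} follows from \eqref{E01E80} by the symmetry $u_3\leftrightarrow u_4$, $(x_3{:}y_3)\leftrightarrow(x_4{:}y_4)$ is correct and survives in either reading, but the main computation as you describe it proves a different statement from the one in the lemma.
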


\begin{proof}
We first compute the action of $E_{3,4}$ on all the kernels, which we may write in terms of a map on $(\mathbb{P}^1)^5$, where each coordinate represents the kernels of $B(u_0)$, \ldots, $B(u_5)$. This action of $E_{3,4}$ is as follows
\begin{align*}
&((1:0),(0:1),(1:1),(x_3:y_3),(x_4:y_4))  \to \\
& \left(\left(u_4 x_3 y_4 \theta_p \left(\frac{\eta }{u_0 u_3},\frac{u_3}{u_0},\frac{v}{u_4},\frac{\eta }{v u_4}\right) -u_3 x_4 y_3 \theta_p \left(\frac{v}{u_3},\frac{\eta }{v u_3},\frac{\eta }{u_0 u_4},\frac{u_4}{u_0}\right)\right. \right.\\
& \left.:y_3 y_4 \left(u_4 \theta_p \left(\frac{\eta }{u_0 u_3},\frac{u_3}{u_0},\frac{v}{u_4},\frac{\eta }{v u_4}\right)-u_3 \theta_p \left(\frac{v}{u_3},\frac{\eta }{v u_3},\frac{\eta }{u_0 u_4},\frac{u_4}{u_0}\right)\right)\right),\\
&  \left( x_3 x_4 \left(u_3 \theta_p \left(\frac{v}{u_3},\frac{\eta }{v u_3},\frac{\eta }{u_1 u_4},\frac{u_4}{u_1}\right)-u_4 \theta_p \left(\frac{\eta }{u_1 u_3},\frac{u_3}{u_1},\frac{v}{u_4},\frac{\eta }{v u_4}\right)\right) \right.\\
& \left. : u_3 x_3 y_4 \theta_p \left(\frac{v}{u_3},\frac{\eta }{v u_3},\frac{\eta }{u_1 u_4},\frac{u_4}{u_1}\right)-u_4 x_4 y_3 \theta_p \left(\frac{\eta }{u_1 u_3},\frac{u_3}{u_1},\frac{v}{u_4},\frac{\eta}{v u_4}\right) \right),\\
&\left(  u_4 x_3 \left(y_4-x_4\right) \theta_p \left(\frac{\eta }{u_2 u_3},\frac{u_3}{u_2},\frac{v}{u_4},\frac{\eta }{v u_4}\right)+u_3 x_4 \left(x_3-y_3\right) \theta_p \left(\frac{v}{u_3},\frac{\eta }{v u_3},\frac{\eta }{u_2 u_4},\frac{u_4}{u_2}\right) \right.\\
&\left.:u_4 y_3 \left(y_4-x_4\right)\theta_p \left(\frac{\eta }{u_2 u_3},\frac{u_3}{u_2},\frac{v}{u_4},\frac{\eta }{v u_4}\right)+u_3 \left(x_3-y_3\right) y_4 \theta_p \left(\frac{v}{u_3},\frac{\eta }{v u_3},\frac{\eta }{u_2 u_4},\frac{u_4}{u_2}\right)\right),\\
&(x_4:y_4), (x_3:y_3))
\end{align*}
Using these kernels new for $B(u_0)$, $B(u_1)$ and $B(u_2)$ in $N_{012}$ above is the relevent constant matrix. Since $E_{3,4}$ itself switches $(x_3:y_3)$ and $(x_4:y_4)$, we obtain an element of the new kernel of $\tilde{B}(u_3)$ by solving
\[
\begin{pmatrix} x_4 \\ y_4 \end{pmatrix} = N_{012} \begin{pmatrix} \tilde{x}_3 \\ \tilde{y}_3 \end{pmatrix},
\] 
which is equivalent to \eqref{E01E80} and similarly, the same step for $z = u_4$ gives \eqref{E01E81}. 
\end{proof}

In calculating the remaining transformation, $\bar{F}_{3,4}$, we know that $F_{3,4}$ does not change kernels of $B(u_0)$, $B(u_1)$ and $B(u_2)$, hence, we have the following result. 

\begin{lem}
The action of $\bar{F}_{3,4}$ is given by
\begin{align}
\bar{F}_{3,4} \cdot
(\hat{x}_3:\hat{y}_3) \to &\left(B_{21}(u_3)B_{21}\left(\dfrac{\eta}{qu_3}\right)-B_{11}(u_3)B_{22}\left(\dfrac{\eta}{qu_3} \right):\right. \\ 
&\left. B_{11}(u_3) B_{21}\left( \dfrac{\eta}{qu_3} \right)- B_{21}(u_3)B_{11}\left(\dfrac{\eta}{qu_3} \right)\right)  \nonumber \\
(\hat{x}_4:\hat{y}_4) = &\left(B_{21}(u_4)B_{21}\left(\dfrac{\eta}{qu_4}\right)-B_{11}(u_4)B_{22}\left(\dfrac{\eta}{qu_4} \right):\right. \\ 
&\left. B_{11}(u_4) B_{21}\left( \dfrac{\eta}{qu_4} \right)- B_{21}(u_)B_{11}\left(\dfrac{\eta}{qu_j} \right)\right),  \nonumber
\end{align}
where the entries are given by \eqref{BE8}.
\end{lem}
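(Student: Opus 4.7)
The plan is to imitate the proof of the preceding $\bar{E}_{3,4}$ lemma, while exploiting a simplification that makes the normalization step trivial here. The asserted formulas for $(\hat x_3 : \hat y_3)$ and $(\hat x_4 : \hat y_4)$ coincide exactly with the right-hand side of \eqref{fijaction} specialized to $(i,j) = (3,4)$, so the real task is only to verify that no further rigidifying gauge change $N_{012}$ is required.

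First I would invoke the observation made in the paragraph immediately preceding the lemma: the map $F_{3,4}$ leaves the kernels of $B(z)$ at $z = u_0, u_1, u_2$ projectively fixed. In the specialization under consideration these kernels are $(1:0)$, $(0:1)$, and $(1:1)$ respectively, so after applying $F_{3,4}$ they remain $(1:0)$, $(0:1)$, $(1:1)$. Consequently the constant matrix $N_{012}$ appearing in \eqref{rigidify} reduces to a scalar multiple of the identity, whence $\bar{F}_{3,4}$ agrees with $F_{3,4}$ on the remaining kernel parameters $(x_3:y_3)$ and $(x_4:y_4)$.

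Second, I would specialize \eqref{fijaction} to $(i,j) = (3,4)$. This directly produces the stated formulas for $(\hat x_3 : \hat y_3)$ and $(\hat x_4 : \hat y_4)$ in terms of the matrix entries $B_{11}$, $B_{21}$, $B_{22}$ evaluated at $u_3$, $u_4$, $\eta/(qu_3)$, and $\eta/(qu_4)$. The explicit dependence on the parameters $x_3, y_3, x_4, y_4$ and on the spectral variables $u_0, \ldots, u_7$ is then recovered by substituting the entries from \eqref{BE8}.

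The only substantive point is the triviality of $N_{012}$ in the first step, and this is essentially the main obstacle: although $F_{3,4}$ is implemented by right multiplication (so a priori \eqref{eijaction} could move the kernels at $u_0, u_1, u_2$), in the present $m=1$ specialization the explicit form of $R_r(z)$ together with the parameter $v$ entering its normalization should conspire, via \eqref{thetaidentity} and the constraint \eqref{Ldef} on $L$, to fix each of $(1:0), (0:1), (1:1)$. Once this reduction is in hand, the lemma follows from \eqref{fijaction} by direct substitution of \eqref{BE8}, without recourse to any further theta identities.
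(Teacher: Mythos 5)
You correctly isolate the one substantive point --- that the rigidifying matrix $N_{012}$ of \eqref{rigidify} must be trivial, after which the lemma is \eqref{fijaction} with $(i,j)=(3,4)$ plus substitution of \eqref{BE8} --- but your justification of that point is wrong. You take $F_{3,4}$ to be the right-multiplication operator of \eqref{right} and then hope that in the $m=1$ specialization the explicit $R_r(z)$, its normalization point $v$, \eqref{thetaidentity} and \eqref{Ldef} ``conspire'' to fix the kernels $(1:0)$, $(0:1)$, $(1:1)$ at $u_0,u_1,u_2$. No such conspiracy occurs: under right multiplication the kernels at $k\neq i,j$ transform by \eqref{eijaction} and for generic parameters (and generic $v$) land at genuinely new points. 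That is precisely what the computation in the immediately preceding lemma for $\bar{E}_{3,4}$ exhibits, and precisely why the normalization $N_{012}$ is needed there; if your mechanism were correct, that lemma would not have required a normalization step either.

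The actual reason, and the one the paper's proof uses, is structural rather than computational: the transformation whose kernel action is \eqref{fijaction} is the one induced by \emph{left} multiplication, $\lambda_l(z)\,(E_{i,j}B(z))=R_l(z)B(z)$ with $R_l$ as in \eqref{RL}. (The paper swaps the letters $E$ and $F$ between \S\ref{isomonodromic} and \S\ref{ellipticPE8}, which is presumably what misled you; within \S\ref{ellipticPE8}, $F_{3,4}$ is the left-multiplicative one.) If $B(u_k)v=0$ and $\det R_l(u_k)\neq 0$ --- which holds for every $k\neq 3,4$ --- then $R_l(u_k)B(u_k)v=0$, so the kernel at $u_k$ is exactly preserved with no appeal to theta identities; hence the kernels at $u_0,u_1,u_2$ remain $(1:0),(0:1),(1:1)$, $N_{012}$ is a scalar, and only the kernels at the two shifted singular points change, via \eqref{fijaction}. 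Note also that your account is internally inconsistent: you want the output formulas of the left-multiplication theorem \eqref{fijaction} while positing the right-multiplication mechanism, under which the kernels at $u_3,u_4$ would merely be swapped (and all the others would move), not given by \eqref{fijaction}.
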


\begin{proof}
This follows from the fact that $F_{3,4}$ itself does not change the kernels, as it is induced by multiplication on the left of $B(z)$. The only kernels changed are those of $B(u_3)$ and $B(u_4)$, and the calculation follows from the previous section.
\end{proof}

We may now construct the explicit Lax pair for the elliptic Painlev\'e equation.

\begin{thm}\label{EllP}
The elliptic Painlev\'e equation, arises as the compatibility between \eqref{Alinear} where $A(z)$ is specified by \eqref{product}, \eqref{Bentries} and \eqref{BE8}, and \eqref{Rlinear}, where $R(z)$ is given by
\[
R(z) = ((F_{3,4} R_l(z))(F_{3,4} N_{012}))^{-1}, 
\]
where $R_l(z)$ is given by \eqref{RL} for $i= 3$ and $j = 4$.
\end{thm}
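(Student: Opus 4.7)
The goal is to verify the compatibility relation \eqref{asymcomp} with the specified $R(z)$ and $\tau = \bar{T}_{3,4}$; the identification of $\bar{T}_{3,4}$ with the elliptic Painlev\'e dynamics on $(\mathbb{P}^1)^2$ established in the preceding two lemmas, combined with Corollary \ref{corintegrability}, then completes the proof.

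The first step is to express $\bar{T}_{3,4} B(z)$ in terms of $B(z)$. Writing $\bar{T}_{3,4} = \bar{F}_{3,4} \circ \bar{E}_{3,4}$, the inner $\bar{E}_{3,4}$ acts on $B$ purely by left multiplication by $R_l(z)/\lambda_l(z)$, with no extra normalization needed since $E_{3,4}$ preserves the three fixed kernels at $u_0, u_1, u_2$. The outer $\bar{F}_{3,4}$ step combines a right multiplication by an $R_r$-type matrix with the constant normalization $N_{012}$ used to restore the canonical kernels; crucially, both of these factors are built from quantities evaluated at the $F_{3,4}$-shifted parameter values, producing $F_{3,4} R_l$ and $F_{3,4} N_{012}$. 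Collecting the pieces, $\bar{T}_{3,4} B(z)$ factors as a left multiplication by a matrix $M(z)$ inheriting the symmetry $M(z) = M(\eta/qz)$ from $R_l$, together with a right multiplication by $(F_{3,4} R_l)(z) \cdot (F_{3,4} N_{012})$.

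Next, I form $\bar{T}_{3,4} A(z) = \bar{T}_{3,4}(B(\eta/qz))^{-1} \cdot \bar{T}_{3,4} B(z)$. The left-multiplier $M(z)$ cancels between the two $B$-factors by virtue of its $\eta/qz$-symmetry, leaving
\[
\bar{T}_{3,4} A(z) = \bigl[(F_{3,4} R_l)(F_{3,4} N_{012})\bigr](\eta/qz)^{-1} \cdot A(z) \cdot \bigl[(F_{3,4} R_l)(F_{3,4} N_{012})\bigr](z).
\]
Setting $R(z)^{-1} = (F_{3,4} R_l)(z)(F_{3,4} N_{012})$ as in the theorem, the $\eta/qz$-symmetry of this expression (inherited from $R_l$, since $F_{3,4}$ acts only on the $u_i$) identifies $[(F_{3,4} R_l)(F_{3,4} N_{012})](\eta/qz)^{-1}$ with $R(qz)$, recovering precisely the Lax compatibility $(\sigma R) A = (\tau A) R$. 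The required symmetry condition on $R$ for consistency with \eqref{Asym} likewise follows from that of $R_l$ together with the constancy of $F_{3,4} N_{012}$.

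The main obstacle is the first step: accurately accounting for the composition $\bar{F}_{3,4} \circ \bar{E}_{3,4}$ at the matrix level, in particular tracking how the right multiplication from $\bar{F}_{3,4}$ acts on the output of $\bar{E}_{3,4}$ and recognising that the relevant right-multiplication factor assembles into the product $(F_{3,4} R_l)(F_{3,4} N_{012})$. Once this bookkeeping is performed, the remaining verification is a formal consequence of the $\eta/qz$-symmetry of $R_l$ and the constancy of $N_{012}$.
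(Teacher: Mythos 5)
There is a genuine gap. Your argument verifies (modulo some $E$/$F$ bookkeeping) that the stated $R(z)$ induces the map $\bar{T}_{3,4}$ through the compatibility relation \eqref{asymcomp} — but that is the routine part of the theorem, essentially built into the constructions of \S\ref{isomonodromic}. The substantive claim is that the resulting birational dynamics \emph{is the elliptic Painlev\'e equation}, and you dispose of this by asserting it was ``established in the preceding two lemmas.'' It was not: those lemmas only record the explicit formulas for $\bar{E}_{3,4}$ and $\bar{F}_{3,4}$ on the coordinates $((x_3:y_3),(x_4:y_4))$. The paper's actual proof of Theorem \ref{EllP} is the remainder of \S\ref{ellipticPE8}: it builds the birational representation of $S_8$ on the affine coordinates $(f,g)\in(\mathbb{P}^1)^2$, locates the six base points $P_1,\dots,P_6$ (all lying on the image of the elliptic curve $\chi$) needed to make $S_8$ act by automorphisms, adjoins two further points $P_7=\chi(\eta)$ and $P_8=\chi(qL/\eta)$ forced by $\iota$ and the Fourier--Laplace symmetry $\eta\mapsto qL/\eta$, shows the full group acts holomorphically on the blow-up of $(\mathbb{P}^1)^2$ at these eight points with symmetry group $\mathbb{Z}\times W(E_8^{(1)})$, and only then concludes by comparison with Sakai's classification. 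None of this geometric identification appears in your proposal, and Corollary \ref{corintegrability} cannot substitute for it: that corollary only says the deformation preserves the Galois group, not that the induced map on the moduli space is Sakai's elliptic Painlev\'e dynamics.

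A secondary point: you assert that $\bar{E}_{3,4}$ ``acts on $B$ purely by left multiplication \dots with no extra normalization needed since $E_{3,4}$ preserves the three fixed kernels at $u_0,u_1,u_2$.'' In the conventions of \S\ref{ellipticPE8} this is backwards: the first lemma there computes explicitly how $E_{3,4}$ \emph{moves} the kernels at $u_0,u_1,u_2$, which is precisely why the normalizing matrix $N_{012}$ enters the formula for $R(z)$; it is the left-multiplication step that leaves those kernels alone. The paper's own $E$/$F$ labelling is admittedly inconsistent between \S\ref{isomonodromic} and \S\ref{ellipticPE8}, but as written your decomposition attaches the normalization to the wrong factor, so the claimed assembly into $(F_{3,4}R_l)(F_{3,4}N_{012})$ does not follow from the steps you describe.
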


\begin{figure}
\begin{tikzpicture}
\filldraw (0,0) circle(2pt);
\filldraw (1,0) circle(2pt);
\filldraw (2,0) circle(2pt);
\filldraw (2,1) circle(2pt);
\filldraw (3,0) circle(2pt);
\filldraw (4,0) circle(2pt);
\filldraw (5,0) circle(2pt);
\filldraw (6,0) circle(2pt);
\filldraw (7,0) circle(2pt);
\draw[thick] (0,0) -- (7,0);
\draw[thick] (2,0)--(2,1);
\end{tikzpicture}
\caption{The Dynkin diagram for $E_8^{(1)}$.\label{E8fig}}
\end{figure}
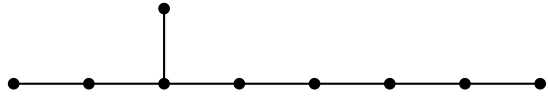

The group $W(D_8^{(1)})$ with an additional Dynkin diagram automorphism coincides with $(\mathbb{Z}\times \Lambda_{E_8})\rtimes W(D_8)$. The Dynkin diagram of type $E_8^{(1)}$ is shown in Figure \ref{E8fig}. We now present the system of isomonodromic deformations in the $m=1$ case in a parameterization that will make the symmetries more apparent. First we can specialize the coordinates so that $x_3 = 1$ and $y_3 = f$ so that $f$ is treated as an affine coordinate. Secondly, the image of $B(u_3)$, which we parameterize as $\langle (1,g) \rangle$ with $g$ treated as an affine coordinate, is related to the kernel of $B(u_4)$ by
\begin{align}\label{kerBu4}
\dfrac{y_4}{x_4} = \frac{\theta_p \left(\frac{u_1}{u_2},\frac{u_0}{u_4}\right) \left(g u_7 \theta_p \left(\frac{u_3}{u_6},\frac{u_5}{u_7},\frac{u_1 u_2 u_3 u_6}{L},\frac{u_0 u_3 u_4 u_6}{L}\right)-u_6 \theta_p \left(\frac{u_5}{u_6},\frac{u_3}{u_7},\frac{u_1 u_2 u_3 u_7}{L},\frac{u_0 u_3 u_4
   u_7}{L}\right)\right)}{\theta_p \left(\frac{u_0}{u_2},\frac{u_1}{u_4}\right) \left(g u_7 \theta_p \left(\frac{u_3}{u_6},\frac{u_5}{u_7},\frac{u_0 u_2 u_3 u_6}{L},\frac{u_1 u_3 u_4 u_6}{L}\right)-u_6 \theta_p \left(\frac{u_5}{u_6},\frac{u_3}{u_7},\frac{u_0 u_2 u_3 u_7}{L},\frac{u_1 u_3 u_4
   u_7}{L}\right)\right)}.
\end{align}
This means that we may either parameterize $B(z)$ in terms of five kernels and three images or we can parameterize $B(z)$ using four kernels and four images. It follows that $A(z)$ is uniquely specified by the following data:
\begin{align}
\tag{$P_1$} \label{P1}&\begin{array}{p{15cm}} The determinant of $B(z)$ vanishes at points $z=u_i$ for $i=0,\ldots,7$ and is nonzero.\end{array}\\
\tag{$P_2$} \label{P2}&\begin{array}{p{15cm}} The kernels of $B(z)$ at $z = u_0, u_1, u_2$ and $u_3$ are $\langle (1,0) \rangle$, $\langle (0,1) \rangle$, $\langle (1,1) \rangle$ and $\langle (1,f) \rangle$ respectively.\end{array}\\
\tag{$P_3$} \label{P3}&\begin{array}{p{15cm}} The images of $B(z)$ at $z = u_7, u_6, u_5$ and $u_3$ are $\langle (1,0) \rangle$, $\langle (0,1) \rangle$, $\langle (1,1) \rangle$ and $\langle (1,g) \rangle$ respectively.\end{array}
\end{align}
The action of the symmetric group, $S_8$, on the $u_i$ variables only requires renormalization, which is induced by constant matrix multiplication either on the left or right. We use the notation
\[
X = \left( \begin{array}{c c c c}
u_0 & u_1 & u_2 & u_3 \\
u_4 & u_5 & u_6 & u_7 \end{array} ; \eta, f,g\right),
\]
to denote the space of parameters and the affine coordinates specifying the image and the kernels of $B(u_3)$. 

The first generator we specify is $s_0$ which is induced by right multiplication by a constant matrix with $\langle( 1,1)\rangle$ as an eigenspace and that permutes  $\langle (1,0) \rangle$ and $\langle (0,1) \rangle$. One such matrix is the matrix with ones on the off diagonal entries, in which the preimage of $(1,f)$ is $(f,1)$, hence, this induces a transformation that sends $f$ to $1/f$. The same action is valid on the right, in which case this matrix permutes the images of $B(u_6)$ and $B(u_7)$ defining $s_6$. These symmetries are given by
\begin{align}
s_0 &\cdot X = \left( \begin{array}{c c c c}
u_1 & u_0 & u_2 & u_3 \\
u_4 & u_5 & u_6 & u_7 \end{array} ; \eta, \dfrac{1}{f} ,g\right), \\
s_6 &\cdot X=  \left( \begin{array}{c c c c}
u_0 & u_1 & u_2 & u_3 \\
u_4 & u_5 & u_7 & u_6 \end{array} ; \eta, f ,\dfrac{1}{g} \right). 
\end{align}
Similarly, the matrix with $\langle (1,0) \rangle$ as an eigenspace that permutes $\langle (1,1) \rangle$ and $\langle (0,1) \rangle$ has the effect
\begin{align}
s_1 &\cdot X = \left( \begin{array}{c c c c}
u_0 & u_2 & u_1 & u_3 \\
u_4 & u_5 & u_6 & u_7 \end{array} ; \eta, \dfrac{f}{1-f} ,g\right), \\
s_5 &\cdot X= \left( \begin{array}{c c c c}
u_0 & u_1 & u_2 & u_3 \\
u_4 & u_6 & u_5 & u_7 \end{array} ; \eta, f ,\dfrac{g}{1-g} \right).
\end{align}
The actions $s_2$ and $s_4$ are induced by diagonal matrices since the matrix possesses eigenspaces $\langle (1,0)\rangle$ and $\langle (0,1)\rangle$. The action of $s_2$ is induced by a matrix that sends $(1,f)$ to $\langle (1,1)\rangle$ on the right, i.e., $\mathrm{diag}(1,1/f)$, however, in swapping $u_2$ and $u_3$, we also change $g$ to correspond to the image of $B(u_2)$, hence, we have
\begin{equation}
s_2 \cdot X =  \left( \begin{array}{c c c c}
u_0 & u_1 & u_3 & u_2 \\
u_4 & u_5 & u_6 & u_7 \end{array} ; \eta, \dfrac{1}{f} , \dfrac{B_{21}(u_2)}{B_{11}(u_2)} =: \bar{g} \right).
\end{equation}
where $\bar{g}$ is related to $f$ and $g$ by 
\begin{align}
\dfrac{\theta_p\left( \frac{u_1}{u_3},\frac{u_2}{u_4} \right)}{\theta_p\left( \frac{u_2}{u_3},\frac{u_1}{u_4} \right)} f = & \frac{\bar{g} u_7 \theta_p \left(\frac{u_2}{u_6},\frac{u_5}{u_7},\frac{u_0 u_2 u_3 u_6}{L},\frac{u_1 u_2 u_4 u_6}{L}\right)-u_6 \theta_p \left(\frac{u_5}{u_6},\frac{u_2}{u_7},\frac{u_0 u_2 u_3 u_7}{L},\frac{u_1 u_2 u_4 u_7}{L}\right)}{\bar{g} u_7 \theta_p \left(\frac{u_2}{u_6},\frac{u_5}{u_7},\frac{u_1 u_2 u_3 u_6}{L},\frac{u_0 u_2 u_4 u_6}{L}\right)-u_6 \theta_p \left(\frac{u_5}{u_6},\frac{u_2}{u_7},\frac{u_1 u_2 u_3 u_7}{L},\frac{u_0 u_2 u_4 u_7}{L}\right)}\\
   &\times \frac{g u_7 \theta_p \left(\frac{u_3}{u_6},\frac{u_5}{u_7},\frac{u_1 u_2 u_3 u_6}{L},\frac{u_0 u_3 u_4 u_6}{L}\right)-u_6 \theta_p \left(\frac{u_5}{u_6},\frac{u_3}{u_7},\frac{u_1 u_2 u_3 u_7}{L},\frac{u_0 u_3 u_4 u_7}{L}\right)}{g u_7 \theta_p \left(\frac{u_3}{u_6},\frac{u_5}{u_7},\frac{u_0 u_2 u_3 u_6}{L},\frac{u_1 u_3 u_4 u_6}{L}\right)-u_6 \theta_p \left(\frac{u_5}{u_6},\frac{u_3}{u_7},\frac{u_0 u_2 u_3 u_7}{L},\frac{u_1 u_3 u_4 u_7}{L}\right)}.\nonumber
\end{align}
To continue to specify a standard set of generators for $S_8$, the action of $s_3$ swaps $u_3$ and $u_4$ which must swap the kernels and images of $B(u_3)$ with those of $B(u_4)$, specifically
\begin{align}
s_3 \cdot& X = \left( \begin{array}{c c c c}
u_0 & u_1 & u_2 & u_4 \\
u_3 & u_5 & u_6 & u_7 \end{array} ; \eta, \dfrac{y_4}{x_4}, \dfrac{B_{21}(u_4)}{B_{11}(u_4)} \right),\\
&\dfrac{B_{21}(u_4)}{B_{11}(u_4)} = \frac{u_6 \theta_p \left(\frac{u_5}{u_6},\frac{u_4}{u_7}\right) \left(\theta_p \left(\frac{u_1}{u_2},\frac{u_0}{u_3},\frac{u_1 u_2 u_4 u_7}{L},\frac{u_0 u_3 u_4 u_7}{L}\right)-f \theta_p \left(\frac{u_0}{u_2},\frac{u_1}{u_3},\frac{u_0 u_2 u_4 u_7}{L},\frac{u_1 u_3 u_4 u_7}{L}\right)\right)}
{u_7
 \theta_p \left(\frac{u_4}{u_6},\frac{u_5}{u_7}\right)\left( \theta_p \left(\frac{u_1}{u_2},\frac{u_0}{u_3},\frac{u_1 u_2 u_4 u_6}{L},\frac{u_0
   u_3 u_4 u_6}{L}\right)-f \theta_p \left(\frac{u_0}{u_2},\frac{u_1}{u_3},\frac{u_0 u_2 u_4 u_6}{L},\frac{u_1 u_3 u_4 u_6}{L}\right)\right)}, \label{ImBu4},
\end{align}
with $y_4/x_4$ as stated by \eqref{kerBu4}. Lastly, $s_4$ is induced by right multiplication by a diagonal matrix that sends the image of $B(u_4)$ to $\langle (1,1) \rangle$. Using the image of $(0,1)$, it is easy to see that a matrix that does this is given by $\mathrm{diag}(1,B_{21}(u_4)/B_{11}(u_4))$, giving the transformation
\[
s_4 \cdot X =  \left( \begin{array}{c c c c}
u_0 & u_2 & u_1 & u_3 \\
u_4 & u_5 & u_6 & u_7 \end{array} ; \eta, f, g\dfrac{B_{21}(u_4)}{B_{11}(u_4)} \right).
\]
This completes the birational representation of the full symmetric group, $S_8$. 

It is important to know where the base points are which are required to make the action of $S_8$ an automorphism of some surface. This can be done by determining which points, if any, arise as images of lines. For example, we may consider the image of $s_4$ of points of the form
\begin{align*}
s_4 \cdot  \left( \begin{array}{c c c c}
u_0 & u_1 & u_2 & u_3 \\
u_4 & u_5 & u_6 & u_7 \end{array} ; \eta, \frac{\theta_p \left(\frac{u_1}{u_2},\frac{u_0}{u_3},\frac{u_1 u_2 u_4 u_7}{L},\frac{u_0 u_3 u_4 u_7}{L}\right)}{\theta_p \left(\frac{u_0}{u_2},\frac{u_1}{u_3},\frac{u_0 u_2 u_4 u_7}{L},\frac{u_1 u_3 u_4 u_7}{L}\right)}, g \right) \\
=  \left( \begin{array}{c c c c}
u_0 & u_1 & u_2 & u_3 \\
u_5 & u_4 & u_6 & u_7 \end{array} ; \eta, \frac{\theta_p \left(\frac{u_1}{u_2},\frac{u_0}{u_3},\frac{u_1 u_2 u_4 u_7}{L},\frac{u_0 u_3 u_4 u_7}{L}\right)}{\theta_p \left(\frac{u_0}{u_2},\frac{u_1}{u_3},\frac{u_0 u_2 u_4 u_7}{L},\frac{u_1 u_3 u_4 u_7}{L}\right)}, 0 \right),
\end{align*}
where $g$ is any element, hence, the inverse of this point is not defined. To make the action of $s_4$ regular, we need to blow up the point
\[
P_1 =  \left( \frac{\theta_p \left(\frac{u_1}{u_2},\frac{u_0}{u_3},\frac{u_1 u_2 u_4 u_7}{L},\frac{u_0 u_3 u_4 u_7}{L}\right)}{\theta_p \left(\frac{u_0}{u_2},\frac{u_1}{u_3},\frac{u_0 u_2 u_4 u_7}{L},\frac{u_1 u_3 u_4 u_7}{L}\right)},0 \right),
\]
which is presented by affine coordinates, $(f,g)$, representing points in $(\mathbb{P}^1)^2$. This also coincides with the point in which the images of $B(u_3)$, $B(u_4)$ and $B(u_7)$ are equal, which is a natural point to consider since if we were to swap the roles of $u_3$ and $u_4$ with $u_5$ and $u_6$, then we cannot find a matrix which sends the images of $B(u_5)$ and $B(u_6)$ to $(1,1)$ and $(0,1)$ respectively. We simiilarly find that $s_4$ is ill-defined on the point
\[
P_2 = \left( \frac{\theta_p \left(\frac{u_1}{u_2},\frac{u_0}{u_3},\frac{u_1 u_2 u_4 u_6}{L},\frac{u_0 u_3 u_4 u_6}{L}\right)}{\theta_p \left(\frac{u_0}{u_2},\frac{u_1}{u_3},\frac{u_0 u_2 u_4 u_6}{L},\frac{u_1 u_3 u_4 u_6}{L}\right)},\infty\right),
\]
and since $s_4$ fixes $f$ and replaces $g$ by a ratio of bilinear forms, blowing up these two points suffices to make $s_4$ regular. However, on the blowup $s_5$ is no longer regular since it does not permute the two points we blew up. This forces us to blow up a third point, and similar consider considerations for $s_2$ give us an additional three points.  To specify these points we utilize the fact that all the points lie in the image of an elliptic curve in $(\mathbb{P}^1)^2$, in fact, we find every point appears in the form
\[
\chi(z) = \left( \dfrac{\theta_p \left( \frac{z}{u_1u_2}, \frac{u_1}{u_2}, \frac{z}{u_0u_3},\frac{u_0}{u_3} \right)}{\theta_p\left( \frac{z}{u_0u_2},\frac{u_0}{u_2},\frac{z}{u_0u_3},\frac{u_0}{u_3}\right)},
 \dfrac{u_6 \theta_p\left( \frac{L}{zu_5u_6}, \frac{u_5}{u_6}, \frac{zu_3u_7}{L}, \frac{u_3}{u_7}\right)}{u_7 \theta_p\left( \frac{L}{zu_5u_7}, \frac{u_5}{u_7}, \frac{zu_3u_6}{L}, \frac{u_3}{u_6}\right)}\right),
\]
for certain values of $z$. In particular, we have
\begin{align*}
&P_1 = \chi\left( \frac{L}{u_5u_6}\right), \hspace{1cm} &&P_2 = \chi\left( \frac{L}{u_5u_7}\right) \hspace{1cm}&&P_3 =\chi\left( \frac{L}{u_6u_7}\right),\\
&P_4 = \chi\left(u_1u_2\right), \hspace{1cm}&&P_5 = \chi\left( u_0u_2\right), \hspace{1cm}&&P_6 = \chi\left(u_0u_1\right).
\end{align*}
In fact blowing up these six points makes the action of $S_8$ holomorphic. This configuration of points is preserved under the action of $s_0$, $s_1$, $s_3$, $s_5$ and $s_6$, so those remain holomorphic on the blow-up. Blowing up $P_1$ and $P_2$ makes $s_4$ holomorphic, and it permutes the other four points, so it remains holomorphic on the full blow-up. Similarly $s_2$ permutes the points that were not needed to make it holomorphic and hence is holomorphic. 

With the action of the symmetry group defined what remains to be done is that we consider $\iota$ and any nontrivial transformation on both the left and right. The action $\iota$ is simple in this setting, as the transformation can simply identify images with kernels and kernels with images up to some permutation. This means that the discrete isomonodromic deformation induced by $R(z) = B(1/z)$ may be given by
\[
\iota \cdot X = \left( \begin{array}{c c c c}
\frac{\eta}{u_7} & \frac{\eta}{u_6} & \frac{\eta}{u_5} & \frac{\eta}{u_3} \\
\frac{\eta}{u_4} & \frac{\eta}{u_2} & \frac{\eta}{u_1} & \frac{\eta}{u_0} \end{array} ; q\eta, g,f\right).
\]
It is also simple to describe the action of two nontrivial isomonodromic deformations, namely $\bar{E}_{0,1}$ and $\bar{F}_{6,7}$. When we specialize kernel vectors to $\langle (1,0)\rangle$ and $\langle (0,1)\rangle$, then property \eqref{R2} dictates that $R_r(z)$ is diagonal, then since diagonal matrices preserve $\langle (1,0)\rangle$ and $\langle (0,1)\rangle$ but possibly change $\langle (1,1) \rangle$, the matrix fixing $\langle (1,0)\rangle$ and $\langle (0,1)\rangle$ and that sends the preimage of $(1,1)$ under $R(u_2)$ to $\langle (1,1) \rangle$ is also diagonal. This means that if $R_r(z)$ is the matrix inducing $E_{0,1}$ and $\bar{R}_r(z)$ induces $\bar{E}_{0,1}$, then 
\[
\bar{R}_r(z) = R_r(z)R_r(u_2)^{-1} = \begin{pmatrix} \frac{\theta_p\left(\frac{z}{u_1},\frac{\eta}{zu_1}\right)}{\theta_p\left(\frac{u_2}{u_1},\frac{\eta}{u_2u_1}\right)} & 0 \\ 0 & \frac{\theta_p\left(\frac{z}{u_0},\frac{\eta}{zu_0}\right)}{\theta_p\left(\frac{u_2}{u_0},\frac{\eta}{u_2u_0}\right)} \end{pmatrix},
\]
inducing the transformation
\begin{align}
\bar{E}_{0,1} \cdot X \to  \left( \begin{array}{c c c c}
\frac{\eta}{u_0} & \frac{\eta}{u_1} & u_2 & u_3 \\
u_4 & u_5 & u_6 & u_7 \end{array} ; \eta, f \frac{\theta_p \left(\frac{\eta }{u_0 u_2},\frac{u_2}{u_0},\frac{\eta }{u_1 u_3},\frac{u_3}{u_1}\right)}{\theta_p \left(\frac{\eta }{u_1 u_2},\frac{u_2}{u_1},\frac{\eta }{u_0 u_3},\frac{u_3}{u_0}\right)} ,g\right),
\end{align}
Similarly, on the left we have the images of $B(u_6)$ and $B(u_7)$ are  $\langle (0,1)\rangle$ and $\langle (1,0)\rangle$, hence, by \eqref{L2}, we have that if $R_l(z)$ is the matrix inducing $F_{6,7}$ and $\bar{R}_l(z)$ induces $\bar{F}_{6,7}$, then 
\[
\bar{R}_l(z) = R_l(u_5)^{-1} R_l(z) =  \begin{pmatrix} \frac{\theta_p\left(\frac{z}{u_7},\frac{\eta}{zu_7}\right)}{\theta_p\left(\frac{u_5}{u_7},\frac{\eta}{u_5u_7}\right)} & 0 \\ 0 & \frac{\theta_p\left(\frac{z}{u_6},\frac{\eta}{zu_6}\right)}{\theta_p\left(\frac{u_5}{u_6},\frac{\eta}{u_5u_6}\right)} \end{pmatrix},
\]
inducing the transformation
\begin{align}
\bar{F}_{6,7} \cdot X \to  \left( \begin{array}{c c c c}
u_0 & u_1 & u_2 & u_3 \\
u_4 & u_5 & \frac{\eta}{qu_6} & \frac{\eta}{qu_7} \end{array} ; \eta, f ,g\frac{\theta_p \left(\frac{\eta }{u_3 u_6},\frac{u_3}{u_6},\frac{\eta }{u_5 u_7},\frac{u_5}{u_7}\right)}{\theta_p \left(\frac{\eta }{u_5 u_6},\frac{u_5}{u_6},\frac{\eta }{u_3 u_7},\frac{u_3}{u_7}\right)}\right).
\end{align}
This means that the nontrivial part of the isomonodromic deformations are contained within the symmetries of $A(z)$. There is just one more transformation, we obtain the same moduli space of matrices if we replace $\eta$ by $qL/\eta$. This has a trivial effect on $B(z)$, but changes the symmetry of the equation. This is no longer a gauge transformation, but rather comes from the application of a certain elliptic version of the Fourier-Laplace transform to the solutions of the equation \cite{rains:noncomgeom}. Including this additional symmetry extends the group to the full $\mathbb{Z} \times W(E^{(1)}_8)$ symmetry coming from the geometry (it corresponds to the node of the Dynkin diagram for $E^{(1)}_8$ which when removed gives the Dynkin diagram for $D_8$). 

We now note that the configuration of six points we blew up above is not preserved by the generators not in $S_8$ and thus we must blow up two additional points
\[
P_7 = \chi(\eta), \hspace{2cm} P_8 = \chi\left(\frac{qL}{\eta}\right).
\]
These two new points are both preserved by $S_8$, hence, the action of $S_8$ remains holomorphic, while the remaining generators were regular on $(\mathbb{P}^1)^2$ and permute the eight base points. It follows therefore that the full group acts holomorphically on the blow up of $(\mathbb{P}^1)^2$ at eight points. By comparison with \cite{Sakai:Rational} we find that this is indeed the elliptic Painlev\'e equation, completing the proof of Theorem \ref{EllP}.

\section{Discussion}

We have presented a system that may be regarded as an elliptic analogue of the Garnier system. Since the case of $m=1$ corresponds to the elliptic Painlev\'e equation, this system sits above each of the Painlev\'e equations. The lowest member, the case in which $m=0$ is rigid, must correspond to the elliptic hypergeometric equation (cf. \cite{Rains2009}). 
 
\section*{Acknowledgements}

The work of EMR was partially supported by the National Science Foundation under the grant DMS-1500806.

\bibliographystyle{plain}%

\bibliography{C:/Mathematics/TeX/refs}

\end{document}